\newcommand{\figref}[1]{Figure~\ref{#1}}
\newcommand{\remove}[1]{} % comment out text
\newtheorem{theorem}{Theorem}
\begin{document}

\preprint{APS/123-QED}

%\title{Parameterized Centrality for Network Analysis}% Force line breaks with \\
\title{A Parameterized Centrality Metric for Network Analysis}

\author{Rumi Ghosh}%
 \email{rumig@usc.edu}
\author{Kristina Lerman}
 \email{lerman@isi.edu}
\affiliation{%
USC Information Sciences Institute\\
4676 Admiralty Way, Marina del Rey, CA 90292
}%

\date{\today}% It is always \today, today,
             %  but any date may be explicitly specified

\begin{abstract}
A variety of metrics have been proposed to measure the relative importance of nodes in a network.
One of these, $\alpha$-centrality~\cite{Bonacich:2001}, measures the number of attenuated paths that exist between nodes.
We introduce a normalized version of this metric and use it to study network structure, specifically, to rank nodes and find community structure of the network. Specifically, we extend the modularity-maximization method~\cite{GirvanNewman04} for community detection to use this metric as the measure of node connectivity. Normalized $\alpha$-centrality is a powerful tool for network analysis, since it contains a tunable parameter that sets the length scale of interactions. By studying how rankings and discovered communities change when this parameter is varied allows us to identify locally and globally important nodes and structures.
We apply the proposed method to several benchmark networks and show that it leads to better insight into network structure than alternative methods.
\end{abstract}

\pacs{89.75.Hc, 89.20.Hh, 89.65.Ef, 02.10.Ud}% PACS, the Physics and Astronomy
                             % Classification Scheme.
%\keywords{Suggested keywords}%Use showkeys class option if keyword
                              %display desired
\maketitle

\section{Introduction}\label{sec:intro}
Centrality measures the degree to which network structure contributes to the importance, or status, of a node in a network. Over the years many different centrality metrics have been defined.
%One class of centrality metrics uses geodesic paths as the basis for ranking nodes.
One of the more popular metrics, betweenness centrality~\cite{Freeman}, measures the fraction of all shortest paths in a network that pass through a given node. Other centrality metrics include those based on random walks~\cite{Stephenson89,PageRank,Noh02,Newman05} and path-based metrics. The simplest path-based metric, degree centrality, measures the number of edges that connect a node to others in a network. According to this measure, the most important nodes are those that have the most connections. %Katz~\cite{Katz},
However, a node's centrality depends not only on how many others it is connected to but also on the centralities of those nodes~\cite{Katz,Bonacich:2001}. This measure is captured by the total number of paths linking a node to other nodes in a network. %, exponentially weighted by the length of the path.
One such metric, $\alpha$-centrality~\cite{Bonacich87,Bonacich:2001}, measures the total number of paths from a node, exponentially attenuated by their length.
%, but now the attenuation factors along direct (from the originating node) and indirect (from intermediate nodes) edges in a path can be different.
The attenuation parameter sets the length scale of interactions. Unlike other centrality metrics, which do not distinguish between local and global structure, a parameterized centrality metric can differentiate between locally connected nodes, i.e., nodes that are linked to other nodes which are themselves interconnected, and globally connected nodes that link and mediate communication between poorly connected groups of nodes. Studies of human~\cite{Granovetter,Simmel,Csermely08} and animal~\cite{Lusseau04} populations suggest that such `bridges' or `brokers' play a crucial role in the information flow and cohesiveness of the entire group.

One difficulty in applying $\alpha$-centrality in network analysis is that its key parameter is bounded by the spectrum of the corresponding adjacency matrix of the network. As a result, the metric diverges for larger values of this parameter. We address this problem by defining \emph{normalized $\alpha$-centrality}. We show that the new metric avoids the problem of bounded parameters while retaining the desirable characteristics of $\alpha$-centrality, namely its ability to differentiate between local and global structures.

In addition to ranking nodes, parameterized centrality can be used to identify communities within a network~\cite{Fortunato10}. In this paper, we generalize modularity maximization-based approach~\cite{Newman104,Newman206}  to use normalized $\alpha$-centrality. Rather than find regions of the network that have greater than expected number of edges connecting nodes~\cite{GirvanNewman04}, our approach looks for regions that have greater than expected number of weighted paths connecting nodes. One advantage of this method is that the attenuation parameter can be varied to identify  local vs.\  global communities.

Normalized $\alpha$-centrality is a powerful tool for network analysis. By  differentiating between locally and globally connected nodes, it provides a simple alternative to previous attempts to quantify fine-grained structure of complex networks, such as the motif-based~\cite{Milo02,Milo04} and role-based~\cite{Guimera05,Guimera07} descriptions. The former measures the relative abundance of subgraphs of a certain type, while latter classifies nodes according to their connectivity within and outside of their community. Applying either of these descriptions to real networks is computationally expensive: role-based analysis, for example,  requires the network to be decomposed into distinct communities first. Normalized $\alpha$-centrality, on the other hand, measures node connectivity at different length scales, allowing us to resolve network structure in a computationally efficient manner.

We use normalized $\alpha$-centrality to study the structure of several benchmark networks, as well as a real-world online social network. We show that this parameterized centrality metric can identify locally and globally important nodes and communities, leading to a more nuanced understanding of network structure.

\section{Centrality and Network Structure}
\label{sec:centrality}
Bonacich~\cite{Bonacich87,Bonacich:2001} defined \emph{$\alpha$-centrality} $C_{i,j}(\alpha,\beta, n)$ as the total number of attenuated paths between nodes $i$ and $j$, with $\beta$ and $\alpha$ giving the attenuation factors\footnote{For some types of networks, e.g., commodity exchange networks, Bonacich allows $\alpha<0$. In communication and information networks we are considering, $\alpha>0$.} along direct edges (from $i$) and indirect edges (from intermediate nodes) in the path from $i$ to $j$, respectively, and $n$ is the length of the longest path.
Given the adjacency matrix of the network $A$, $\alpha$-centrality matrix is defined as follows:
\begin{equation}
\label{eq:inf}
C(\alpha,\beta,n) = \beta A +\beta \alpha_1 A^2 + \cdots + \beta \prod_{k=1}^n\alpha_k   A^{n+1}
\end{equation}
\noindent
The first term gives the number of paths of length one (edges) from $i$ to $j$, the second gives the number of paths of length two, etc. Although $\alpha_k$ along different edges in a path could in principle be different, for simplicity, we take them all to be equal: $\alpha_k=\alpha, \ \forall k$. In this case, the series converges to $C(\alpha,\beta, n\to \infty)=\beta A {( I-\alpha A)}^{-1}$, which holds while  $\alpha < 1/\lambda_1$, where $\lambda_1$ is the largest characteristic root of  $A$~\cite{Ferrar}. The computation of $\lambda_1$ is difficult, especially for large networks, which include most complex real-world networks.

To get around this difficulty, we define \emph{normalized $\alpha$-centrality} matrix as:
\begin{equation}
\label{eq:inf1}
NC(\alpha,\beta,n \to \infty) =\frac{C(\alpha,\beta,n \to \infty)}{\sum_{ij} C_{ij}(\alpha,\beta, n \to \infty)}
\end{equation}
\noindent
As we show in  the appendix, in contrast to $\alpha$-centrality, normalized $\alpha$-centrality is not bounded by $\lambda_1$. Also, we prove that,  assuming $|\lambda_1| $ is strictly greater than any other eigenvalue, $\lim_{\alpha  \to 1/{|\lambda_{1}|}}NC(\alpha,\beta, n \to \infty)$ exists; and as $\alpha$ is increased, $NC(\alpha,\beta, n \to \infty)$ converges to this value and is finite for $\alpha \le 1$.

%For $\alpha=\beta$, $\alpha$-centrality reduces to the Katz score~\cite{Katz}.
Just like the original $\alpha$-centrality, normalized $\alpha$-centrality contains a tunable parameter $\alpha$ that sets the length scale of interactions.
For $\alpha=0$, (normalized) $\alpha$-centrality takes into account direct edges only. As $\alpha$ increases, $NC(\alpha,\beta, n \to \infty)$ becomes a more global measure, taking into account ever larger network components. The expected length of a path, the radius of centrality, is $(1-\alpha)^{-1}$.
%In real-world networks, we can estimate the value of $\alpha$ along a particular edge by measuring the probability that a node transmits a message received from a distant node along this edge.   In many situations, however, this information is not readily available. Although we may not know its exact value, studying how network properties change with $\alpha$ gives us valuable insight into network structure.

\subsection{Node Ranking}
\label{sec:ranking}
Much of the analysis done by social scientists considered local structure, i.e., the number~\cite{Wasserman:1994} and nature~\cite{Simmel, Granovetter, Burt} of an individual's ties. By focusing on local structure, however, traditional theories fail to take into account the macroscopic structure of the network. Many metrics proposed and studied over the years deal with this shortcoming, including PageRank~\cite{PageRank} and random walk centrality~\cite{Newman05}. These metrics aim to identify nodes that are `close' in some sense to other nodes in the network, and are therefore, more important.
PageRank, for example, gives the probability that a random walk initiated at node $i$ will reach $j$, while random-walk centrality computes the number of times a node $i$ will be visited by walks from all pairs of nodes in the network.

Normalized $\alpha$-centrality, $NC_i(\alpha,\beta, n \to \infty)=\sum_j{NC_{ij}(\alpha,\beta, n \to \infty)}$, also measures how `close' node $i$ is to other nodes in a network and can be used to rank the nodes accordingly. The presence of a tunable parameter turns normalized $\alpha$-centrality into a powerful tool for studying network structure and allows us to seamlessly connect the rankings produced by well-known local and global centrality metrics.
For $\alpha=0$, normalized $\alpha$-centrality takes into account local interactions that are mediated by direct edges only, and therefore, reduces to \emph{degree centrality}. As $\alpha$ increases and longer range interactions become more important, nodes that are connected by longer paths grow in importance.
%For $\alpha=\beta$, $\alpha$-centrality reduces to the \emph{Katz score}~\cite{Katz}.
For $\alpha < {1}/{\lambda_1}$, the rankings produced by normalized $\alpha$-centrality are equivalent to those produced by $\alpha$-centrality.
Also as shown in the Appendix, for symmetric matrices, as $ \alpha  \to {1}/{|\lambda_{1}|}$, normalized $\alpha$-centrality converges to \emph{eigenvector centrality}~\cite{Bonacich:2001}. The rankings no longer change as $\alpha$ increases further, since $\alpha$ has reached some fundamental length scale of the network.
%Parameter $\alpha$, therefore, allows us to seamlessly connect the rankings produced by these well-known centrality metrics.

\subsection{Community Detection}
\label{sec:modularity}
Girvan \& Newman~\cite{GirvanNewman04} proposed \emph{modularity} as a metric for evaluating community structure of a network.
The modularity-optimization class of community detection algorithms~\cite{Newman104,Newman204,Newman206} finds a network division that maximizes the modularity, which is defined as $Q=$ (connectivity within community)-(expected connectivity), where connectivity is measured by the density of edges.
We extend this definition to use normalized $\alpha$-centrality as the measure of network connectivity.
According to this definition, in the best division of a network, there are more weighted paths connecting nodes to others within their own community than to nodes in other communities.
Modularity can, therefore, be written as:
\begin{equation}
\label{eq: mod2}
Q(\alpha)=\sum_{ij} {[NC_{ij}(\alpha,\remove{\beta,} n \to \infty) - \overline{NC}_{ij}(\alpha,\remove{\beta,} n \to \infty)]\delta(s_i, s_j)}
\end{equation}
\noindent $NC_{ij}(\alpha, \remove{\beta,} n \to \infty)$ is given by Eq.~(\ref{eq:inf1}). Since $\beta$ factors out of modularity, without loss of generality we take $\beta=1$.
%RG -%%%%%%%%%%rg%%%%%%%%%
$\alpha$ can be varied from 0 to 1.
%%%%%%%%%%%%%%%%%%%%%
$\overline{NC}_{ij}(\alpha,\remove{\beta,} n \to \infty)$ is the expected normalized $\alpha$-centrality, and $s_i$ is the index of the community $i$ belongs to, with $\delta(s_i, s_j) = 1$ if $s_i =s_j$; otherwise, $\delta(s_i, s_j)=0$. We round the values of $NC_{ij} (\alpha, \remove{\beta,} n \to \infty)$ to the nearest integer.

To compute $\overline {NC_{ij} }(\alpha,\remove{\beta,} n \to \infty)$, we consider a graph, referred to as the null model, which has the same number of nodes and edges as the original graph, but in which the edges are placed at random. To make the derivation below more intuitive, instead of normalized $\alpha$-centrality, we talk of the number of attenuated paths. In normalized $\alpha$-centrality, the number of attenuated paths is scaled by a constant, hence the derivation below holds true. When all the nodes are placed in a single group, then  axiomatically, $Q(\alpha)=0$. Therefore $ \sum_{ij}[NC_{ij} (\alpha,\remove{\beta,} n \to \infty)- \overline {NC_{ij} }(\alpha,\remove{\beta,} n \to \infty)] =0$, and we set
%\begin{equation}
%\label{eq: mod3}
$W = \sum_{ij} \overline {NC_{ij}} (\alpha,\remove{\beta,} n \to \infty)=\sum_{ij} NC_{ij}(\alpha,\remove{\beta,} n \to \infty).$
%\end{equation}
Therefore, according to the argument above, the total number of paths between nodes in the null model $\sum_{ij} \overline {NC_{ij} }(\alpha,\remove{\beta,} n \to \infty)$ is equal to the total number of paths in the original graph, $\sum_{ij} NC_{ij}(\alpha,\remove{\beta,} n \to \infty)$.  We further restrict the choice of null model to one where the expected number of paths reaching node $j$, $W_j^{in}$, is equal to the actual number of paths reaching the corresponding node in the original graph.
$
W_{j}^{in} = \sum_{i} \overline{{NC}_{ij}}(\alpha,\remove{\beta,} n \to \infty) = \sum_{i} NC_{ij}(\alpha,\remove{\beta,} n \to \infty)\,$.
Similarly, we also assume that in the null model, the expected number of paths originating at node $i$, $W_{i}^{out}$, is equal to the actual number of paths originating at the corresponding node in the original graph
$
W_{i}^{out} = \sum_{j} \overline{{NC}_{ij}}(\alpha,\remove{\beta,} n \to \infty) = \sum_{j} NC_{ij}(\alpha,\remove{\beta,} n \to \infty)
$. $W$, $W_{i}^{out}$ and $W_j^{in}$ are then rounded to the nearest integers.

Next, we reduce the original graph $G$ to a new graph $G^{\prime}$ that has the same number of nodes as $G$  and total number of edges  $W$, such that each edge has weight 1 and the number of edges between nodes $i$ and $j$ in $G^{\prime}$ is  $NC_{ij}(\alpha,\beta, n \to \infty)$. Now the expected number of paths between  $i$ and $j$  in graph $G$ could be taken as the expected number of the edges between nodes $i$ and $j$ in graph $G^{\prime}$ and the actual number of paths between nodes $i$ and $j$  in graph $G$ can be taken as the actual number of edges  between node $i$ and node $j$ in graph $G^{\prime}$. The equivalent random graph $G^{\prime \prime}$ is used to find the \emph{expected}  number of edges  from node $i$ to node $j$. In this graph  the edges are placed in random subject to constraints:
%\begin{itemize}
(\emph{i}) The total number of edges in $G^{\prime \prime}$ is $W$;
(\emph{ii}) The out-degree of node $i$ in $G^{\prime \prime}$ = out-degree of node $i$ in $G^{\prime} = W_{i}^{out}$;
(\emph{iii}) The in-degree of a node $j$ in graph $G^{\prime \prime}$ =in-degree of node $j$ in graph $G^{\prime} =W_{j}^{in}$.
%\end{itemize}
Thus in $G^{\prime \prime}$ the  probability that an edge will emanate from a particular node  depends only on the out-degree of that node; the probability that an edge is incident on a particular node depends only on the  in-degree of that node; and the probabilities of the two nodes being the two ends of a single edge are independent of each other. In this case, the probability that an edge exists from $i$ to $j$ is given by \emph{edge in $G'$ emanates from i} $\cdot $ \emph{edge in $G'$ incident on j}=$(W_{i}^{out}/W)(W_{j}^{in}/W)$.
Since the total number of edges is $W$ in $G^{\prime \prime}$, therefore the expected number of edges between $i$ and $j$ is $W \cdot (W_{i}^{out}/W)(W_{j}^{in}/W)=\overline{{NC}_{ij}}(\alpha,\beta, n \to \infty)$, the expected the expected $\alpha$ centrality in $G$.

Once we compute $Q(\alpha)$, we have to select an algorithm to divide the network into communities that maximize $Q(\alpha)$. Brandes et al.~\cite{Brandes} have shown that the decision version of modularity maximization is NP-complete. Like others~\cite{Newman206,Leicht}, we use the leading eigenvector method to obtain an approximate solution. In this method, nodes are assigned to either of two groups based on a single eigenvector  corresponding to the largest  positive eigenvalue of the modularity matrix. This process is repeated for each group until modularity does not increase further upon division.
% KL - why does this still depend? Does not normalized alpha centrality take care of it?
%RG -Yes it does. So I commented it out
\remove
{
This method holds only for $\alpha < \frac{1}{\lambda_1}$. The desired  values of $\alpha$ can easily be derived from the computation of normalized $\alpha$-centrality, as that value of $\alpha$
at which  normalized $\alpha$-centrality converges.
}

\subsection{Relation to Other Centrality Measures}
We can generalize the centrality metric presented above to a notion of path-based connectivity and relate it to other centrality metrics.
Let $q=(q_{ij})$ be a $n\times n$ matrix such that $q^n_{ij}$ is the number of paths of length $n$ connecting nodes $i$ and $j$. The number of paths of length one connecting $i$ and $j$ is $q^1_{ij}=A_{ij}$; the number of paths of length two is $q^2_{ij}=(A \times A)_{ij}$, etc. The expected number of paths connecting nodes $i$ and $j$ is $E(q_{ij})$ where:
\begin{equation}
\label {eq:top1}
\mathbf{E(q)= (W_{1} \cdot q^1+ W_{2} \cdot q^2+ \ldots +W_{n} \cdot q^n+ \ldots)}\,, \nonumber
\end{equation}
\noindent where $ \bf{W_{k}}$ can be a scalar or a vector. The proximity score $E(q_{ij})$ can be used to find out how connected, or close, two nodes are.

Several path-based centrality metrics can be expressed in terms of $E(q_{ij})$, including random walk models~\cite{PageRank,Tong06,Tong08,Zhou03,Newman05}, degree centrality, Katz score~\cite{Katz}, as well as $\alpha$-centrality. In a random walk model, a particle starts a random walk at node $i$, and iteratively transitions to its neighbors with probability proportional to the corresponding edge weight. At each step, the particle returns to $i$ with some restart probability ($1-c$). The proximity score is defined as the steady-state probability $r_{i,j}$ that the particle will reach node $j$~\cite{Tong08}.
\begin{itemize}
%\noindent (a)
\item If $W_{k}=c ^{k}\cdot D^{-(k)}$ where $c$ is a constant and  $D$ is  an $n \times n$ matrix with $D_{ij}=\sum_{j=1}^{n} A_{ij} $ if $i=j$ and $0$ otherwise;
then, $E(q_{ij})$ reduces to proximity score in random walk models~\cite{Tong06,Tong08}.

%\noindent (b)
\item If   $W_{k}= \Pi _{j=1}^{k}\alpha_{j}$, where  the scalar $\alpha_{j}$ is the  attenuation factor along the $j$-th link in the path, then $E(q_{ij})$ reduces to $\alpha$-centrality score from $i$ to $j$ (normalizing  leads to normalized $\alpha$-centrality). For  ease of computation, we have taken  $\alpha_1=\beta$  and $\alpha_i=\alpha$,\ $\forall i\neq 1$.  $\alpha$-centrality holds for $\alpha < 1/{\lambda_1}$ ($\lambda_1$ is the largest eigenvalue of $A$). However, as shown in the Appendix, normalized $\alpha$-centrality holds for all values of $\alpha$.
%As stated before, $\alpha < 1/\lambda $ where $\lambda$ is the largest characteristic root of  $A$.
%Gershgorin\~{o}s Circle Theorem (1931) gives the simple sufficient  condition $\alpha < 1/max_{i}(D_{ii})$.
%\noindent (c)
\item When $\beta=\alpha$, this  in turn reduces to the Katz status score~\cite{Katz}.
\item If   $W_{k}=\alpha$, and adjacency matrix $A$ is symmetric, then as  $\alpha  \to 1/{\lambda_1}$, $E(q)$ is proportional to the inner product of the eigenvector corresponding to ${\lambda_1}$, with itself. It would lead to eigenvector centrality as shown in Appendix.
%RG-%%%%%%%%%%%%
\item When  $W_{1}=1$ and $ W_{k}=0,\ \forall k>1$, then $E(q_{ij})$ is the degree centrality used in modularity-maximization approaches~\cite{Newman104}.
%\item When  $\alpha_{1}=1$ and $ \alpha_{k}=0,\ \forall k>1$, then $E(q_{ij})$ is the degree centrality used in modularity-maximization approaches~\cite{Newman104}.
%%%%%%%%%%%%%%%%
\end{itemize}

% Scalability and complexity
%A simple algorithm for computing ranking usi  $NC(\alpha,\beta, n \to \infty)$  for different values of $\alpha$ using %dynamic programming is presented Ref.~\cite{Ghosh10snakdd}.

We have implemented~\cite{Ghosh10snakdd} a simple algorithm to compute $NC(\alpha,\beta, n \to \infty)$  using an alternative formulation of  $\alpha$-centrality:
\begin{equation}
\label{eq:inf2}
C(\alpha,\beta,n +1) =\beta A+ \alpha C(\alpha,\beta,n)A
\end{equation}
\noindent For any given value of $\alpha$, this method iteratively computes $C(\alpha,\beta, n \to \infty)$ and consequently $NC(\alpha,\beta,n \to \infty)$  until convergence.  Experimentally we have observed, that this method reaches convergence very quickly. Considering a network with $N$ nodes and $M$ links, each iteration of this algorithm (for a given value of $\alpha$) has a runtime complexity of $O(MN)$ and space complexity of $O(M)$.

 In order to study variation in this metric for $\alpha < 1/|\lambda_1|$, we must choose a step size of order $ \propto {1}/{\min(d^{out}_{max},d^{in}_{max})}$, since by the Gershgorin circle theorem, $|\lambda_1| \le \min(d^{out}_{max},d^{in}_{max})$, where $d^{out}_{max}$ and $d^{in}_{max}$ are the maximum out- and in-degree of the network respectively. Since the formulation of normalized $\alpha$-centrality is very similar to that of PageRank, similar block based strategies can be used for fast and efficient computation of both PageRank and $NC(\alpha,\beta, n \to \infty)$ ~\cite{Haveliwala:1999,Kamvar:2003}. Like PageRank, normalized $\alpha$-centrality can easily be implemented using the map-reduce paradigm~\cite{Dean:2008}, guaranteeing the scalability of this algorithm.

\section{Empirical Results}
\label{sec:results}
We apply the formalism developed above to benchmark networks studied in literature and a network extracted from the social photosharing site Flickr.

\subsection{Karate Club Network}
\begin{figure*}[t]
\begin{tabular}{ccc}
\includegraphics[width=0.32\textwidth]{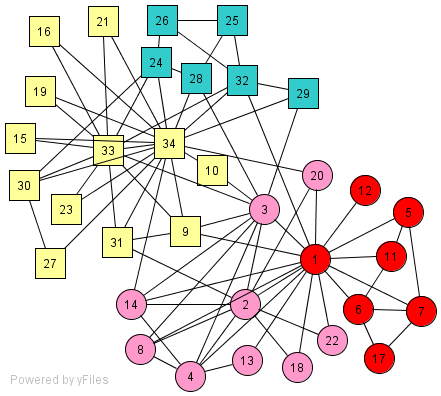} &
\includegraphics[width=0.32\textwidth]{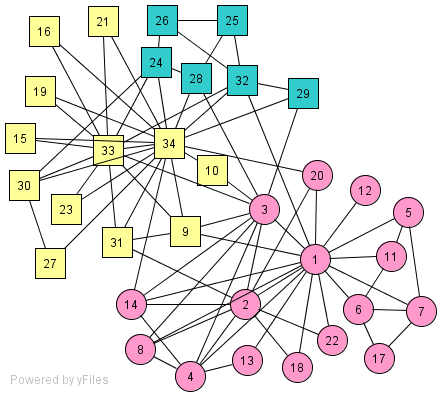} &
\includegraphics[width=0.32\textwidth]{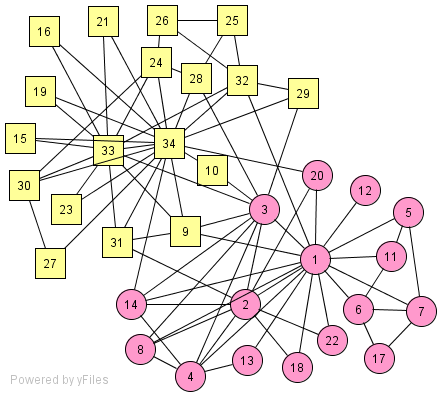} \\
(a) $ \alpha=0$ & (b) $0<\alpha<0.14$ &   (c) $\alpha \ge 0.14$
\end{tabular}
\caption{Zachary's karate club data. Circles and squares represent the two actual factions, while colors stand for discovered communities as the strength of ties increases: (a) $ \alpha=0$, (b) $0<\alpha<0.14$, (c) $\alpha \ge 0.14$
}
 \label{fig:karate}
\end{figure*}

First, we study the friendship network of Zachary's karate club~\cite{Zachary} shown in Figure~\ref{fig:karate}. During the course of the study, a disagreement developed between the administrator and the club's instructor, resulting in the division of the club into two factions, represented by circles and  squares in \figref{fig:karate}.
We find community division of this network for $ 0 \leq \alpha \le 0.1487$ (maximum $\alpha$ is given by reciprocal of the largest eigenvalue of the adjacency matrix). The first bisection of the network results in two communities, regardless of the value of $\alpha$, which are identical to the two factions observed by Zachary.  However, when the algorithm runs to termination (no more bisections are possible), different groups are found for different values of $\alpha$. For $\alpha=0$, the method reduces to edge-based modularity maximization~\cite{Newman204} and leads to four groups~\cite{Duch05,Fortunato10} (\figref{fig:karate}(a)). For  $0<\alpha<0.14$  it discovers three groups (\figref{fig:karate}(b)), and for $\alpha > 0.14$,  two groups that are identical to the factions found by Zachary (\figref{fig:karate}(c)). Thus, increasing $\alpha$ allows local groups to merge into more global communities.

\begin{figure}[th]
\begin{center}
   \includegraphics[width=\columnwidth]{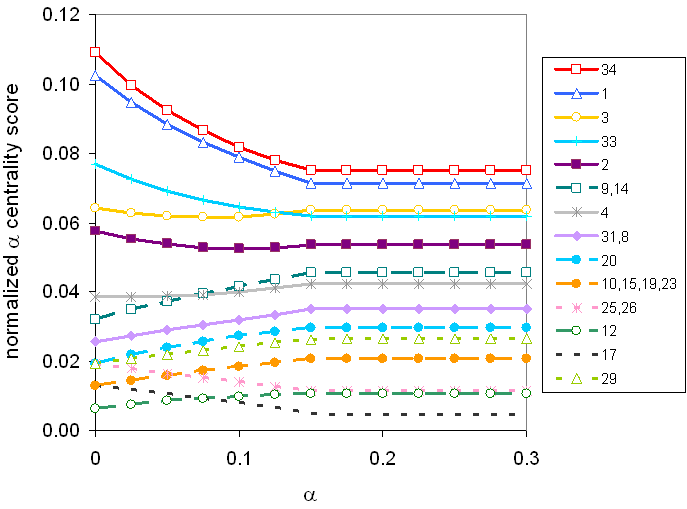}
\caption{Centrality scores of Zachary club members vs.\ $\alpha$.}
\label{fig:rankings}
\end{center}
\end{figure}

Figure~\ref{fig:rankings} shows how the normalized $\alpha$-centrality scores of nodes change with $\alpha$. For $\alpha=0$, normalized $\alpha$-centrality reproduces the rankings given by degree centrality. As we show in the appendix, the final rankings produced by normalized $\alpha$-centrality for this symmetric matrix are the same as those given by the eigenvector centrality.  This can be confirmed by their values in  Table \ref{tbl:kl}. Varying $\alpha$ allows us to smoothly transition from a local to a global measure of centrality.

Nodes 34 and 1 have the highest centrality scores, especially at lower $\alpha$ values. These are the \emph{leaders} of their communities. It was the disagreement between these nodes, the club administrator (node 1) and instructor (node 34), that led to the club's division.  Nodes 33 and 2 also have high centrality and hold leadership positions. All these nodes are also scored highly by betweenness centrality and PageRank. Note that centrality scores of these nodes decrease with $\alpha$, indicating that they are far more important locally than globally.

A node may also have high centrality if it is connected to many nodes from different communities. Such nodes, which \emph{bridge} communities, are crucially important to maintaining cohesiveness and facilitating communication flow in both human~\cite{Simmel,Granovetter} and animal~\cite{Lusseau04} groups. We can identify these nodes because their normalized $\alpha$-centrality increases with $\alpha$, i.e., they become more important as longer paths become more important. Centrality of nodes 3, 14, 9, 31, 8, 20, 10, etc., increases with $\alpha$ from moderate to relatively high values. While most of these nodes are directly connected to both communities, some are only indirectly connected by longer paths. Betweenness centrality of these nodes is low, but non-zero.

%Nodes 15 and 23 have low influence score, but their influence score increases with $\alpha$. Hence according to our hypothesis they are the core members of the community to which they belong. As we can see that both 15 and 23 belonging to Group1 are directly connected to the leaders (nodes 34 and 33) of that group. Hence it would not be too far-fetched to assume that they are the core members of the group.
Nodes 25, 26 and 17 have low centrality which decreases with $\alpha$. These are \emph{peripheral} members. Betweenness centrality of 17 is zero, as expected, but 25 and 26 have scores similar to 31. PageRank scores of these peripheral nodes are higher than nodes 21, 22, 23, which are connected to central nodes, and comparable to scores of the bridging nodes 20 and 31. While both betweenness centrality and PageRank correctly pick out leaders, they do not distinguish between locally and globally connected nodes.

%In fact, area under the centrality curve for $ 0 \le \alpha \le \alpha_{max}$ can also be used to rank nodes. According to this measure, nodes 34, 1, 33, 3, 2 are the most important in that order, and 12, 17 are least important.

\begin{table}[htdp]
\caption{Comparison of eigenvector centrality and converged normalized  $\alpha$-centrality for Zachary's karate club network.}
\begin{center}
\begin{tabular}{|c|c|c||c|c|c|}
\hline
$node$	&	$\alpha$-cen	&	$eigenvector$ & $node$	&	$\alpha$-cen	&	$eigenvector$	\\ \hline

34	&	0.075		&	0.3734	&	15	&	0.0204	&	0.1014\\ \hline
1	&	0.0714	&	0.3555	&	16	&	0.0204	&	0.1014\\ \hline
3	&	0.0637	&	0.3172	&	19	&	0.0204	&	0.1014\\ \hline
33	&	0.062		&	0.3086	&	21	&	0.0204	&	0.1014\\ \hline
2	&	0.0534	&	0.266		&	23	&	0.0204	&	0.1014\\ \hline
9	&	0.0457	&	0.2274	&	18	&	0.0186	&	0.0924\\ \hline
14	&	0.0455	&	0.2265	&	22	&	0.0186	&	0.0924\\ \hline
4	&	0.0424	&	0.2112	&	13	&	0.0169	&	0.0843\\ \hline
32	&	0.0384	&	0.191		&	6	&	0.016		&	0.0795\\ \hline
31	&	0.0351	&	0.1748	&	7	&	0.016		&	0.0795\\ \hline
8	&	0.0343	&	0.171		&	5	&	0.0153	&	0.076\\ \hline
24	&	0.0302	&	0.1501	&	11	&	0.0153	&	0.076\\ \hline
20	&	0.0297	&	0.1479	&	27	&	0.0152	&	0.0756\\ \hline
30	&	0.0271	&	0.135		&	26	&	0.0119	&	0.0592\\ \hline
28	&	0.0268	&	0.1335	&	25	&	0.0115	&	0.0571\\ \hline
29	&	0.0263	&	0.1311	&	12	&	0.0106	&	0.0529\\ \hline
10	&	0.0206	&	0.1027	&	17	&	0.00475	&	0.0236\\ \hline	
\remove{
\begin{table}[htdp]
\caption{Comparison of Eigen Vector centrality and Normalized  $\alpha$ centrality at convergence for Zachary's Karate club}
\begin{center}
\begin{tabular}{|c|c|c|}
\hline
node	&	alpha	&	eigenvector	\\ \hline
34	&	0.0750	&	0.3734	\\ \hline
1	&	0.0714	&	0.3555	\\ \hline
3	&	0.0637	&	0.3172	\\ \hline
33	&	0.0620	&	0.3086	\\ \hline
2	&	0.0534	&	0.2660	\\ \hline
9	&	0.0457	&	0.2274	\\ \hline
14	&	0.0455	&	0.2265	\\ \hline
4	&	0.0424	&	0.2112	\\ \hline
32	&	0.0384	&	0.1910	\\ \hline
31	&	0.0351	&	0.1748	\\ \hline
8	&	0.0343	&	0.1710	\\ \hline
24	&	0.0302	&	0.1501	\\ \hline
20	&	0.0297	&	0.1479	\\ \hline
30	&	0.0271	&	0.1350	\\ \hline
28	&	0.0268	&	0.1335	\\ \hline
29	&	0.0263	&	0.1311	\\ \hline
10	&	0.0206	&	0.1027	\\ \hline
15	&	0.0204	&	0.1014	\\ \hline
16	&	0.0204	&	0.1014	\\ \hline
19	&	0.0204	&	0.1014	\\ \hline
21	&	0.0204	&	0.1014	\\ \hline
23	&	0.0204	&	0.1014	\\ \hline
18	&	0.0186	&	0.0924	\\ \hline
22	&	0.0186	&	0.0924	\\ \hline
13	&	0.0169	&	0.0843	\\ \hline
6	&	0.0160	&	0.0795	\\ \hline
7	&	0.0160	&	0.0795	\\ \hline
5	&	0.0153	&	0.0760	\\ \hline
11	&	0.0153	&	0.0760	\\ \hline
27	&	0.0152	&	0.0756	\\ \hline
26	&	0.0119	&	0.0592	\\ \hline
25	&	0.0115	&	0.0571	\\ \hline
12	&	0.0106	&	0.0529	\\ \hline
17	&	0.0048	&	0.0236	\\ \hline
}
\end{tabular}
\end{center}
\label{tbl:kl}
\end{table}

\begin{figure}[th]
\begin{center}
   \includegraphics[width=\columnwidth]{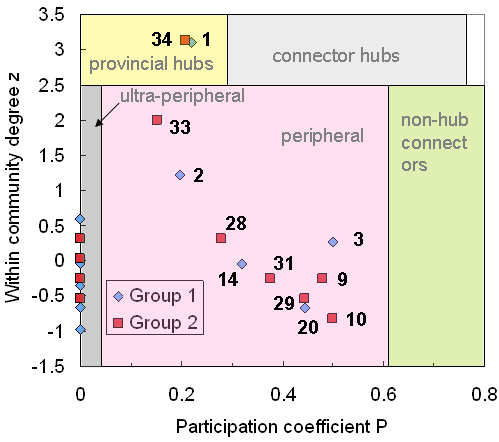}
\caption{Classification of karate club nodes according to the roles scheme proposed by Guimera et al.~\protect\cite{Guimera05}: (i) non-hubs ($z < 2.5$) are divided into   \emph{ultra-peripheral}, \emph{peripheral}, and \emph{connector} nodes (\emph{kinless} nodes whose links are homogeneously distributed among all communities are not shown); (ii) hubs ($z \ge 2.5$) are subdivided into \emph{provincial} (majority of link within their own community), \emph{connector hubs} (many links to other communities). \emph{Global hubs} whose links are homogeneously distributed among all communities are not shown.}
\label{fig:roles}
\end{center}
\end{figure}

%%% ROLES
Guimera and collaborators~\cite{Guimera05} proposed a role-based description of complex networks as an alternative to the `average description' approach, which characterizes network structure in terms of average degree or degree distribution. They define a role in terms of the relative within-community degree $z$ (which measures how well the node is connected to other nodes in its community) and participation coefficient $P$ (which measures how well the node is connected to nodes in other communities). They propose a heuristic classification scheme to assign roles to nodes based on where they fall in the $z$--$P$ plane and find similar patterns of role-to-role connectivity among networks with similar functional needs and growth mechanisms~\cite{Guimera07}.
% non-hubs ($z < 2.5$) and hubs ($z \ge 2.5$). Non-hubs are further subdivided into   \emph{ultra-peripheral} and \emph{peripheral} nodes (none, most of whose links fall within its community respectively), \emph{satellite connectors}, and \emph{kinless} nodes (whose links are homogeneously distributed among all communities). Hubs are subdivided into \emph{provincial hubs} (majority of link within their own community), \emph{connector hubs} (many links to other communities) and \emph{global hubs} (links homogeneously distributed among all communities). Guimera et al. found similar patterns of connectivity between

Figure~\ref{fig:roles} shows the positions of nodes in the karate club network in the $z$--$P$ plane. Colored regions demarcate the boundaries of different roles according to Guimera et al.'s classification scheme. Nodes separate into provincial hubs (34, 1), peripheral (33, 2, 28, 14, 31, 29, 20, 3, 9, 10) and ultra-peripheral nodes (rest of the nodes). No special role is assigned to the bridging nodes, such as 9. Even if the boundary of non-hub connectors is shifted to slightly less than $P=0.5$ in order to identify nodes 3, 9, 10 as serving a special role, the method would still miss node 14, whose position in the network is very similar to node 9. This is because the method takes into account direct links only, rather than complete connectivity between nodes. The method also requires one to first identify communities in the network, which is a very computationally expensive procedure for large networks. Our method, on the other hand, uses only matrix multiplication and provides a computationally efficient and scalable way to identify network structure.

\subsection{Florentine Families}
Padgett~\cite{Padgett} studied the structure of political and business relationships among the elite families of Renaissance Florence. The two rival factions during this period were the \emph{oligarchs}, composed of the patrician families, and the \emph{Mediceans}, who formed close ties with the newly powerful businessmen, or the ``new men.'' Before the rise of Medicis, the oligarchs dominated Florentine politics and economics and cemented their power through marriage. They were less willing to enter into business relationships with the ``new men.'' The Medicis, on the other hand, consolidated their power through business and marriage relationships. Scholars have studied the business and marriage networks of Renaissance Florence to explain the outcomes of the power struggles between the factions and the rise of the Medici family during this important period of Western European history.

\begin{figure*}[tbh]
\begin{tabular}{ccc}
     \includegraphics[width=0.32\textwidth]{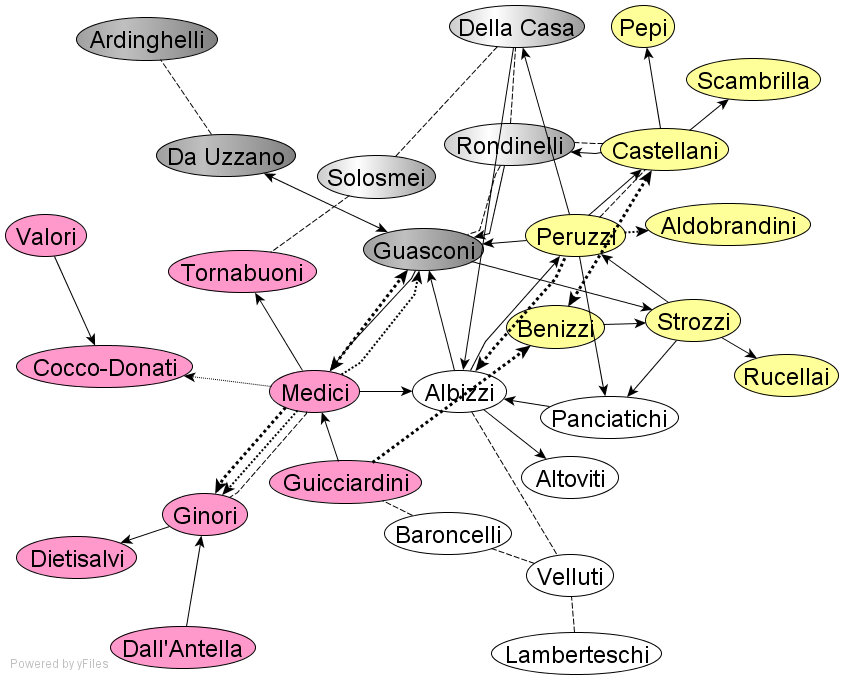} &
 \includegraphics[width=0.32\textwidth]{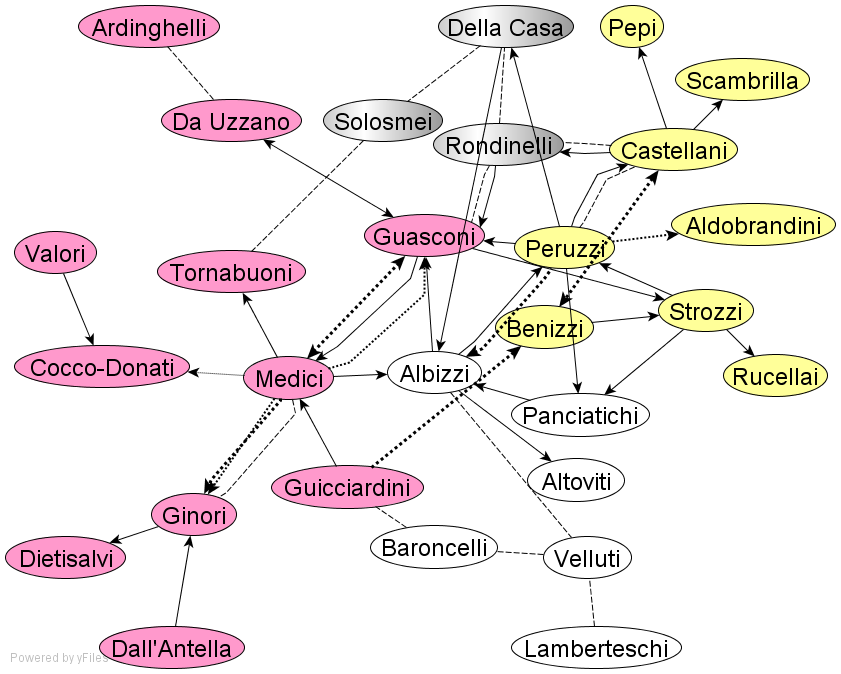} &
   \includegraphics[width=0.32\textwidth]{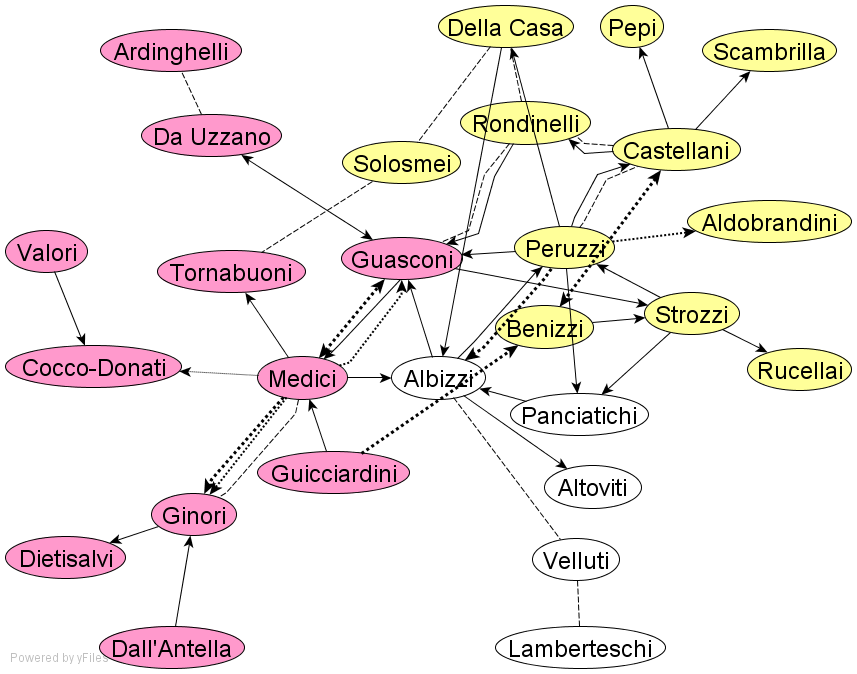}
 \\
     (a) $0\le \alpha \le 0.05$ & (b) $0.05<\alpha \le 0.1$ &   (c) $0.1 < \alpha \le 0.25$
\end{tabular}
\caption{The groups in the Florentine family data set: (a)$ 0\le \alpha \le 0.05$ (b) $0.05<\alpha \le 0.1$ (c)$0.1 < \alpha \le 0.25$}
\label{fig:fam_groups}
\end{figure*}

%\paragraph{Community Detection}

We applied our community  detection algorithm to the heterogenous network containing both marriage and business ties shown in Figure~\ref{fig:fam_groups}. The marriage ties are shown by straight lines. The dashed lines show the different business relations.
The marriage ties are asymmetric, with the wife-giving family being considered superior to the wife-receiving family.
All relations are weighted equally. We symmetrized the resulting adjacency matrix by adding it to its transpose. We studied community division of this network for values of $\alpha$ in the range $0 \leq \alpha \le 0.25$, since $0.25 < 1/{\lambda_1} <0.26$. For $ 0\leq \alpha \le 0.05$ we found seven  distinct groups, shown in Fig.~\ref{fig:fam_groups}(a). Two of these are small and disconnected from the rest of the network. The first of these groups is composed of Guadigni, Fioravanti and Bischeri families and the other of  Orlandini and Davazati families.
%\remove{Guadigni, Fioravanti and Bischeri, being connected amongst themselves and disconnected to the rest of the network form their own group (shown in light green) and so do Orlandini and Davazati (shown in sky blue).}
The three largest groups within the connected component are:  $(1)$ families aligned with Medici, $(2)$ families aligned with the oligarchs, such as Strozzi, Peruzzi,  and
% KL - confirm that this is true - mauve cluster has oligarch families with split loyalties
%R-all of them have split loyalties except  Panciatichi, Lamberteschi, Baroncelli
$(3)$ mostly oligarch families with split loyalties, like the Alibizzi.
%families with split loyalties and those like the Albizzi, Baroncelli, Velluti, Panciatichi, Altoviti  and Lamberteschi  who are aligned with the oligarchs (shown in mauve).
For lower values of $\alpha$, the oligarchs are split into two groups (Fig.~\ref{fig:fam_groups}(a), (b)).  The rift  within the oligarchs detected by our algorithm  is corroborated by historic events. When a lottery randomly produced too many Medici officeholders in the Signoria (1433), Rinaldo Albizzi, the titular head of the oligarchs,  sent out a word to assemble troops in order to forcibly seize Signoria from the Medicis. However, his repeated efforts to assemble troops (especially from Palla Strozzi) were frustrated by other supporters' changing their minds and drifting away~\cite{Padgett}, indicating factional split within the oligarchs.
%In fact the Albizzi family itself had many members with split loyalties.
In contrast, Medicis could immediately and effectively mobilize their supporters, as the result of which no military action ensued and Cosimo Medici took over the budding Florentine state.

As we increase the scale of interactions by increasing $\alpha$, the five groups within the connected component gradually coalesce into three distinct, as shown in Fig.~\ref{fig:fam_groups}. First, the group comprising of Guasconi, Da-Uzzano and Ardinghelli integrate with the Medicis (Fig.~\ref{fig:fam_groups}(b)). When $\alpha$ is  further increased, the group comprising of Rondinelli, Solosmei and Della Casa integrate with the oligarchs (Fig.~\ref{fig:fam_groups}(c)).

%We note that Guasconi has strong ties to the oligarchs. Also, our node-ranking mechanism throws insight on the importance of Guasconi family as a bridge between the two opposing parties.  This observed agglomeration of the Guasconi family with the Medici party  on increasing the scale of interaction (by increasing $\alpha$),  is  latent phenomena unobserved  for small values of $\alpha$. Hence it is not observed when we take just edge-based connectivity to compute modularity  as in \cite{Newman104} (since it is equivalent to our modularity with $\alpha=0$).  However,  this observed agglomeration of Guasconi with the Medici party is a very important insight; it  may be an important factor contributing to the centralized or global rise of the Medicis.

\begin{figure}[htbp]
\begin{center}
   \includegraphics[width=3.5in]{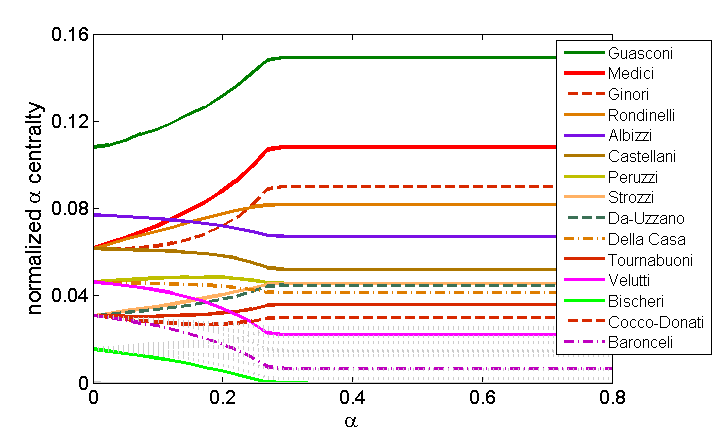}
\caption{Normalized $\alpha$-centrality scores of the families within the business-marriage network. Some of the nodes are identified, with the rest shown in grey.}
\label{fig:all}
\end{center}
\end{figure}

\remove{
\begin{figure}[htbp]
\begin{center}
   \includegraphics[width=3.5in]{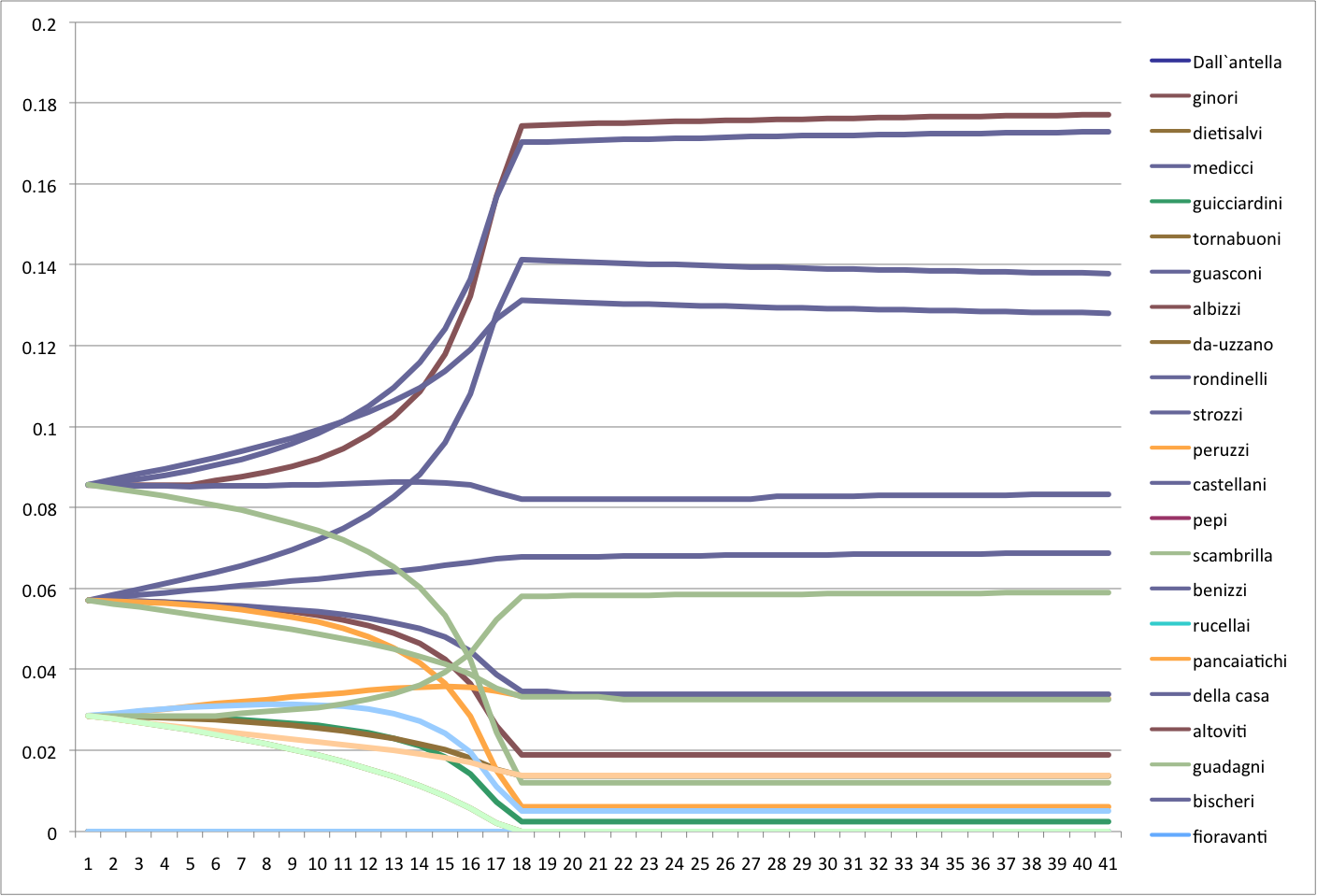}
\caption{$\alpha$ centrality scores of the families within the network comprising of only business connections.}
\label{fig:business}
\end{center}
\end{figure}
\begin{figure}[htbp]
\begin{center}
   \includegraphics[width=3.5in]{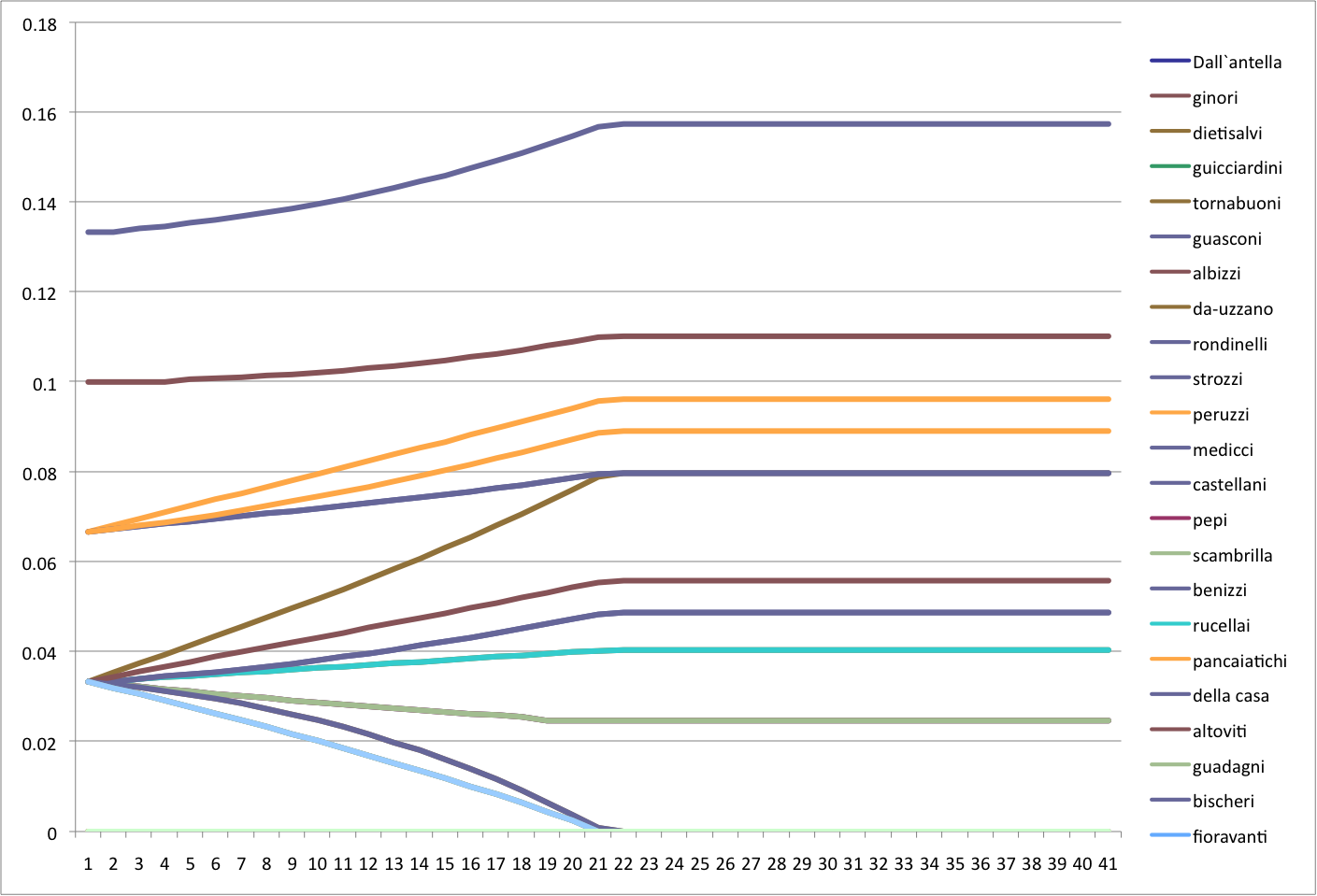}
\caption{$\alpha$ centrality scores of the families within the network comprising of only business connections.}
\label{fig:marriage}
\end{center}
\end{figure}
}
%\paragraph{Node Ranking}
Figure~\ref{fig:all} shows how the normalized $\alpha$-centrality scores of the families in the heterogeneous business-marriage network change with $\alpha$.
Guasconi family has the highest centrality score across all values of $\alpha$. This is not surprising given this family's central position in the network {bridging} the oligarchs and the Mediceans. This observation is corroborated by the findings that cross-pressurised (by the Mediceans and the oligarchs) Guasconis  were split in their partisan loyalties~\cite{Padgett}.
Similarly, the Medicis, who were able to expertly exploit both business and marriage connections, increase in importance as $\alpha$ increases.
%On the other hand, the Medici party was a ``heterogenous mixture of contradictory interests and crosscutting networks''~\cite{Padgett} (including business and marriage), which again is corroborated by the finding that their centrality rises with $\alpha$ centrality graphs, we see that the Medici family rises in importance when we take business into account (Fig. \ref{fig:business} and Fig. \ref{fig:all}) as compared to taking just marriage into account (Fig. \ref{fig:marriage}). Also we see, that the centrality score of the Medici family increase as we increase the scale of interaction (by increasing  $\alpha$)  in all these graphs, showing their global or centralized importance.
On the other hand, the oligarchs such as Strozzi and Peruzzi families, who were patrician to the core and had few business relations outside of their faction, see their centrality decrease with $\alpha$.
%Families like Peruzz and Strozzi who were patricians to the core and were very influential members to the Oligarch party have higher or comparable centrality scores in the marriage network to the Medici. Besides, for all of them the centrality score increases with $\alpha$ showing the centralized importance in the marriage network. However,  in the combined network, all of these oligarch families have lower centrality score than Medici often decreasing with the increase of $\alpha$.
%Hence our investigation predicts  that Medici turns out to be very influential in this heterogenous network as observed by sociologists \cite{Padgett}. Normalized $\alpha$ centrality gives some interesting insight into how the global centrality of the Medicis in the heterogenous social and economic Florentine network led to their subsequent rise.
Thus, the historic ascendance of the Mediceans can be observed already in the business and marriage networks they created.

%We observe that value of normalized $\alpha$ centrality converges as predicted theoritically.
%\remove{The Guasconi family comes out to be most prominent. We observe that Da-Uzzano family is only connected to Guasconi family, but as $\alpha$ increases the importance of this family increases, since it is connected to the most powerful family-Guasconi.

%The Ginori family is connected to the Dietasalvi and the Dall'Antella family by marriage, but these three families are not connected to anyone else in the network. Hence the the global importance of the Ginori family is less (though it may have some local importance).  This can be clearly observed, since,  as we increase $\alpha$, the importance of the Ginori family decreases.
%}

%\paragraph{Comparison with eigenvector centrality}
%In section \ref{Appendix}, we have shown that in case of symmetric matrices, the induced node ordering due to eigen-vector centrality is equal to the induced ranking due to normalized $\alpha$ centrality as this ranking converges ($\alpha \ge \frac{1}{\lambda1}$).
Although the heterogenous network of Florentine families is not a symmetric network, since it contains asymmetric marriage relations, we find that rankings produced by normalized $\alpha$-centrality for $\alpha > 1/{|\lambda_1|}$ is well correlated with those produced by eigenvector centrality. However, as observed by Bonacich~\cite{Bonacich:2001}, for asymmetric the marriage network there are important differences between the rankings of eigenvector centrality and normalized $\alpha$-centrality. For instance,  the eigenvector centrality scores of Bischeri, Guadigni and Orlandini  are zero, even though they were wife-giving families. The reason for this is their segregation from the large connected component. This anomaly is corrected by the normalized $\alpha$-centrality, though the centrality scores for Bischeri, Guadigni and Orlandini  are very small.

\remove{
\begin{table}[htdp]
\caption{Comparison of eigenvector centrality and normalized  $\alpha$-centrality scores at convergence of the Florentine families.}
\begin{center}
\begin{tabular}{|c|c|c|}
\hline
$family$	&	$eigenvector$	&	\emph{normalized $\alpha$-centrality}	\\ \hline
Guasconi	&	0.564354	&	0.1490	\\ \hline
Medicci	&	0.409375	&	0.108	\\ \hline
Ginori	&	0.340562	&	0.0899	\\ \hline
Rondinelli	&	0.309436	&	0.0817	\\ \hline
Albizzi	&	0.254793	&	0.0673	\\ \hline
Castellani	&	0.197605	&	0.0522	\\ \hline
Peruzzi	&	0.173211	&	0.0457	\\ \hline
Strozzi	&	0.172231	&	0.0455	\\ \hline
Da-Uzzano	&	0.169533	&	0.0448	\\ \hline
Della Casa	&	0.156317	&	0.0413	\\ \hline
Tornabuoni	&	0.135999	&	0.0359	\\ \hline
Cocco-Donati	&	0.113521	&	0.0300	\\ \hline
Pancaiatichi	&	0.095792	&	0.0253	\\ \hline
Dietisalvi	&	0.094439	&	0.0249	\\ \hline
Velluti	&	0.084133	&	0.0222	\\ \hline
Solosmei	&	0.08106	&	0.0214	\\ \hline
Altoviti	&	0.070655	&	0.0187	\\ \hline
Benizzi	&	0.05674	&	0.0150	\\ \hline
Pepi	&	0.054796	&	0.0145	\\ \hline
Scambrilla	&	0.054796	&	0.0145	\\ \hline
Aldobrandini	&	0.048032	&	0.0127	\\ \hline
Rucellai	&	0.04776	&	0.0126	\\ \hline
Ardinghelli	&	0.047012	&	0.0124	\\ \hline
Baroncelli	&	0.025274	&	0.0067	\\ \hline
Lamberteschi	&	0.02333	&	0.0062	\\ \hline
Guicciardini	&	0.007008	&	0.0019	\\ \hline
Bischeri	&	0	&	0	\\ \hline
Guadagni	&	0	&	0	\\ \hline
Orlandini	&	0	&	0	\\ \hline
Dall`Antella	&	0	&	0	\\ \hline
Fioravanti	&	0	&	0	\\ \hline
Valori	&	0	&	0	\\ \hline
Davanzati	&	0	&	0	\\ \hline

\end{tabular}
\end{center}
\label{tbl:fl}
\end{table}
}

\subsection{Other Real-World Networks}
\label{sec:other}
In addition to the social networks described above, we evaluated the performance of our community division algorithm on three other real-world networks:
the US College football %\footnote{ \texttt{http://www-personal.umich.edu/$\sim$mejn/netdata/}}
and the political books networks, as well as the social network retrieved from the social photosharing site Flickr. We were not able to evaluate rankings due to the lack of ground truth for these data sets.
The first network represents the schedule of Division 1 games for the 2001 season where the nodes represent teams and the edges represent the regular season games between teams~\cite{GirvanNewman02}. The teams are divided into conferences containing 8 to 12 teams each. Games are more frequent between members of the same conference, though inter-conference games also take place. This leads to an intuition, that the natural communities may be larger than conferences.

The political books network represents books about US politics sold by the online bookseller Amazon.\footnote{\texttt{http://www.orgnet.com/}} Edges represent frequent co-purchasing  by the same buyers, as indicated by the ``customers who bought this book also bought these other books'' feature of Amazon.  The nodes were labeled \emph{liberal}, \emph{neutral}, or \emph{conservative} by Mark Newman on a reading their descriptions and reviews on Amazon\footnote{\texttt{http://www-personal.umich.edu/$\sim$mejn/netdata/}}. We take these labels as communities.

To collect the final data set, we sampled Flickr's social network by identifying roughly 2000 users interested in one of three topics: \emph{portraiture}, \emph{wildlife}, and \emph{technology}.
We used the Flckr API to perform a tag search using relevant keywords to retrieve 500 `most interesting' images for each topic and extracted the names of users who uploaded these images.\footnote{The keywords used for image search were (a) \textsf{newborn} for the \emph{portraiture} topic, (b) \textsf{tiger} and \textsf{beetle} for the \emph{wildlife} topic, and (c) \textsf{apple} for the \emph{technology} topic.}
%Each keyword is ambiguous. \textsf{Tiger}, for example, could mean a big cat of the \emph{panthera } genus, but also a flower (Tiger lily), Mac operating system (OS X Tiger), or a famous golfer (Tiger Woods), while \textsf{beetle} could describe a bug or a car. The keyword \textsf{newborn} could refer to human babies just as well as to kittens and puppies, while \textsf{apple} could mean the computer maker or a fruit.
Further, we identified four  users (eight for the \emph{wildlife} topic) who were interested in each topic by studying their profiles, specifically  group membership and user's tags.
Groups such as ``Big Cats'', ``Zoo'', ``The Wildlife Photography'', etc.  pointed to user's interest in \emph{wildlife}. In addition, tags that users attached to their images could also help identify their interests. Users who used \textsf{nature} and \textsf{macro} tags were probably interested \emph{wildlife} rather than \emph{technology}. Similarly, users interested in human, rather than animal, \emph{portraiture} tagged their images with \textsf{baby} and \textsf{family}.
We then used Flickr API to retrieve these users' contacts, as well as their contacts' contacts, and labeled all by the topic through which they were discovered.
%In other words, users who uploaded one of the 500 most interesting images retrieved by the query \textsf{tiger}, were labeled \emph{wildlife}, whether or not they were interested in wildlife photography. The contacts and contacts's contacts of the four users within this set identified as being interested in wildlife photography were also labeled \emph{wildlife}.
We reduced this network to an undirected network of mutual contacts only, resulting in a network of $5747$ users, with $1620$, $1337$ and $2790$ users labeled  \emph{technology}, \emph{portraiture} and \emph{wildlife} respectively. Although we did not verify that all the users were interested in the topics they were labeled with, we use these `soft' labels to evaluate the discovered communities.

%\subsubsection{Results}
We use \emph{purity} to evaluate the quality of discovered communities. We define purity as the fraction of all pairs of objects in the same community that are assigned to the same group by the algorithm. This is a simplified version of the Wallace criterion~\cite{Wallace83} for evaluating performance of clustering algorithms.
\remove{
% mutual information
We adopt normalized mutual information $MI$ as the metric for evaluating the quality of discovered communities~\cite{Danon05,Barber07}. Suppose our method finds a community division $X$, when the actual community division of the network is $Y$. The probability that a node is assigned to group $x$ when it actually belongs to group $y$ is $P(x,y)=N_{xy}/n$, where $N_{xy}$ is the number of nodes that were assigned to $x$ that belong to group $y$, and $n$ is the total number of nodes. The normalized mutual information is
$$
 MI(X,Y)=\frac{2I(X,Y)}{H(X)+H(Y)}, \nonumber
 %&=& \frac{2 \sum_{x,y}{P(X,Y) \log{\frac{P(X,Y)}{P(X)P(Y)}}}} {-\sum_x{P(X)\log{P(X)}} - \sum_y{P(Y) \log{P(Y)}}} \nonumber
$$
where standard mutual information and entropy are defined as $I(X,Y)=\sum_{x,y}{P(X,Y) \log{\frac{P(X,Y)}{P(X)P(Y)}}}$, $H(X)=\sum_x{P(X)\log{P(X)}}$, and  $H(Y)=\sum_y{P(Y)\log{PY)}}$. When $MI=1$, the discovered communities are the actual groups in the network; while for $MI=0$, they are independent of the actual groups.
}

\remove{
\begin{table}[htdp]
\caption{The number and purity of communities discovered at different values of $\alpha$}
\begin{center}
\begin{tabular}{|c|c|c||c|c|c||c|c|c|}
\hline
\multicolumn{3}{|c||}{\emph{karate club}} &
\multicolumn{3}{c||}{\emph{football}} &
\multicolumn{3}{c|}{\emph{flickr}} \\ \hline
$\alpha$ & grps & Pu & $\alpha$ & grps & Pu & $\alpha$ & grps & Pu\\ \hline
0.00 	& 4 &	0.505 	& 0.00 & 8 	& 0.715	& 0.00    & 4 &	0.501 \\
0.12 	& 3 & 	0.736 	&  0.02 & 8	& 0.723	&  0.001 & 3 &	0.565\\
0.14 	& 2 & 	1.000		& 0.04 & 8	& 0.723	&  0.002 & 3 &	 0.567\\
	& 	&		& 0.06 & 7	& 0.723	&  0.003 & 3 &	0.567\\ \cline{1-3}
\multicolumn{3}{|c||}{\emph{political books}} & 0.08 & 7	& 0.723	&  0.004 & 3 &	0.567\\ \cline{1-3}
0.00 	& 4 &	0.633		& 0.10 & 7	& 0.791	&  0.005 & 3 &	0.568\\
0.04 	& 3 & 	0.805 	& 0.12 & 6	& 0.803	&  0.006 & 3 &	0.570\\
0.08 	& 2 & 	0.917 	&  0.14 & 6	& 0.813	&  0.007 & 3 &	0.571\\
 	&  	&		& 0.16 & 6 	& 0.813 	& 0.008 & 3 &	0.572\\
 	& 	&		& 0.18 & 4	& 0.862	&  0.009 & 3 & 	0.574\\ \hline
% mutual information
%$\alpha$ & grps & FM & $\alpha$ & grps & FM & $\alpha$ & MI & $\alpha$ & MI\\ \hline
%0.00 & 	0.677 & 	0.00 & 0.520  	& 0.00 & 0.695	& 0.00 &   0.2877\\
%0.12 & 	0.804 &  	0.04 & 0.597 	& 0.04 & 0.699	&  0.002 & 0.2880\\
%0.28 & 	1.000	& 	0.08 & 0.598 	& 0.08 & 0.658	&  0.004 & 0.2894	\\
%	 & 		 & 		 & 		 & 0.12 & 0.682	&  0.006 & 0.2918\\
%	  & 		 & 		 & 		 & 0.16 & 0.684	&  0.008 & 0.2942\\
%	   & 		 & 		 & 		 & 0.18 & 0.557	&  0.009 & 0.2958\\ \hline
\end{tabular}
\end{center}
\label{tbl:mi}
\end{table}

\begin{table}[htdp]
\caption{The number and purity of communities discovered at different values of $\alpha$ for Florentine Families}
\begin{center}
\begin{tabular}{|c|c|c|}
\hline
$\alpha$	&	grps	&	Pu	\\ \hline
0.00  & 7 & 0.34\\
0.05  & 6 & 0.34\\
0.10  & 5 & 0.42\\
\hline
\end{tabular}
\end{center}
\label{tbl:kl}
\end{table}
}

\begin{table}[htdp]
\caption{The number and purity of communities discovered at different values of $\alpha$}
\begin{center}
\begin{tabular}{|c|c|c||c|c|c||c|c|c|}
\hline
\multicolumn{3}{|c||}{\emph{karate club}} &
\multicolumn{3}{c||}{\emph{football}} &
\multicolumn{3}{c|}{\emph{flickr}} \\ \hline
$\alpha$ & grps & Pu & $\alpha$ & grps & Pu & $\alpha$ & grps & Pu\\ \hline
0.00 	& 4 &	0.505 	& 0.00 & 8 	& 0.715	& 0.000    & 4 &	0.501 \\
0.12 	& 3 & 	0.736 	&  0.02 & 8	& 0.723	&  0.001 & 3 &	0.565\\
0.14 	& 2 & 	1.000	& 0.04 & 8	& 0.723	&  0.002 & 3 &	 0.567\\ \cline{1-3}
\multicolumn{3}{|c||}{\emph{florentine}} & 0.06 & 7	& 0.723	&  0.003 & 3 &	0.567\\  \cline{1-3}
0.00    & 7 & 0.34      & 0.08 & 7	& 0.723	&  0.004 & 3 &	0.567\\
0.05    & 6 & 0.34      & 0.10 & 7	& 0.791	&  0.005 & 3 &	0.568\\
0.10    & 5 & 0.42      & 0.12 & 6	& 0.803	&  0.006 & 3 &	0.570\\\cline{1-3}
\multicolumn{3}{|c||}{\emph{political books}} &  0.14 & 6	& 0.813	&  0.007 & 3 &	0.571\\\cline{1-3}
0.00    & 4 &	0.633	& 0.16 & 6 	& 0.813 	& 0.008 & 3 &	0.572\\
0.04 	& 3 & 	0.805 	& 0.18 & 4	& 0.862	&  0.009 & 3 & 	0.574\\
0.08 	& 2 & 	0.917 	&       &   &       &       &   & \\ \hline
\end{tabular}
\end{center}
\label{tbl:mi}
\end{table}

The number and purity of the communities found in networks as a function of the parameter $\alpha$ are shown in Table~\ref{tbl:mi}.  The case $\alpha=0$ corresponds to edge-based modularity method.  As $\alpha$ increases, the number of groups discovered in all networks goes down, while their purity increases. This is consistent with our hypothesis that using smaller values of $\alpha$ allows us to identify more local network structure, while larger values of $\alpha$ lead to more global structure. In the Karate club network, for example, at $\alpha=0$, there are four small communities, as shown in Fig.~\ref{fig:karate}(a). These local communities coalesce into two large groups as $\alpha$ increases (Fig.~\ref{fig:karate}(c)), which are identical to the groups identified by Zachary~\cite{Zachary}.

To evaluate the communities discovered in the Florentine families network, we use the tight constraint of party loyalty.  Hence the families could be either Medicean, oligarch, or have split loyalties. We note that this is a very conservative evaluation criterion, since there were factions present within the parties themselves~\cite{Padgett}. Since families with split loyalties would be correctly classified as belonging to either of the two parties, we remove them from purity calculation, focusing instead on identifying community of party loyalists only. Purity is further reduced by the presence of the two isolated groups. However, purity of discovered communities increases with $\alpha$.
% KL - Rumi, do you remember whether this is true?
% RG- purity increases with alpha
The small local communities found at lower value of $\alpha$ could indicate factions within parties.

%To compute evaluate the communities detected, we use purity as above. We use the tight constraint of party loyalty to evaluate the communities detected.  Hence the families could be either Medician, Oligarchs or have split loyalties. We note that this is a very conservative evaluation criterion, since there were factions present within the parties themselves  \cite{Padgett} and the actual communities present might have been more than 2.
%There also exists parties with split loyalties, which can be put either with the Oligarchs or the Medicians by our algorithm, since it assigns families to groups deterministically. The assignment of these families with split loyalties to any of the two parties, would be correct. Therefore, to eradicate any bias, we remove the parties with split loyalty in the calculation of purity. Also, note that the only knowledge that we use in group assignment is the structure of the network. Hence though members of structurally isolated groups (in sky blue and light green), would belong to either of these two parties (or have split loyalties), we have no way of deducing this,  from the structure of the network, which, in turn reduces measure of purity. Also, we observe an increase in purity as $\alpha$ increases.

\section{Related Work}
\label{sec:related}
A variety of metrics have been proposed to measure node's centrality in a network~\cite{Katz,Hubbell:1965,Bonacich:1972,Freeman,Wasserman:1994,PageRank,Bonacich:2001,Newman05}, yet few studies systematically evaluated their performance on real-world networks.
Liben-Nowell and Kleinberg~\cite{Liben07} compared the performance of several commonly used centrality metrics on the link prediction task and found Katz score~\cite{Katz} to be the most effective measure for this task, outperforming PageRank~\cite{PageRank} and its variants.
The $\alpha$-centrality metric modifies the Katz score by introducing a parameter $\alpha$, that gives a weight to indirect links and also sets the length scale of interactions in the network. We showed recently~\cite{Ghosh10snakdd} that normalized $\alpha$-centrality outperforms other centrality metrics on the task of predicting influential nodes in an online social network.

%%% ROLES
Guimera and collaborators~\cite{Guimera05,Guimera07} proposed role-based description of complex networks. They define a role in terms of the relative within-community degree $z$ (which measures how well the node is connected to other nodes in its community) and participation coefficient $P$ (which measures how well the node is connected to nodes in other communities). They proposed a heuristic classification scheme based on where the nodes lie in the $z$--$P$ plane. This classification scheme is similar to the local vs. globally-connected distinction we are making, with connector nodes being more globally connected nodes while provincial hubs and peripheral nodes are more locally connected. Role-based analysis requires community decomposition of the network to be performed first. This is a computationally expensive procedure for most real-world networks. Our approach, on the other hand, allows us to differentiate between roles of nodes in a more computationally efficient way.

Community detection is another active area in networks research (see \cite{Fortunato10} for a comprehensive review). Like us, Arenas et al.~\cite{Arenas} have generalized modularity to find correlations between nodes that go beyond nearest neighbors.
Their approach relies on the presence of motifs~\cite{Milo02,Milo04}, i.e., connected subgraphs such as cycles, to identify communities within a network.
For example, higher than expected density of triangles implies presence of a community, and a triangle modularity may be defined to identify it. The motif-based modularity uses the size of the motif to impose a limit on the proximity of neighbors.
Our method, on the other hand, imposes no such limit. The measure of global correlation computed using $\alpha$-centrality  is equal to the weighted average of correlations  for motifs of different sizes. Our method enables us to easily calculate this complex term.

%%%%%
\remove{
There has been some work in motif-based communities in complex networks \cite{Arenas} which like our work extends traditional notion of modularity introduced by Girvan and Newman~\cite{GirvanNewman02}. The underlying motivation for motif-based community detection is that  ``the high density of edges within a community determines correlations between nodes going beyond nearest-neighbours,'' which is also our motivation for applying centrality-based modularity to community detection.
Though the motivation of this method is to determine the correlations between nodes beyond nearest neighbors, yet it does impose a limit on the proximity of neighbors to be taken into consideration dependent on the size of the motifs. The method we propose, on the other hand, imposes no such limit on proximity. On the contrary, it considers the correlation between nodes in a more global sense. The measure of global correlation evaluated using the b-centrality metric  would  be equal to the weighted average of correlations  when motifs of different sizes are taken. B-centrality enables us to calculate this complex term quickly and efficiently.

Resolution limit is one of the main limitations of the original modularity detection approach\cite{fortunato07_2}. It can account for the comment by Leskovec et al.~\cite{Leskovec08www}  that they ``observe tight but almost trivial communities at very small scales, the best possible communities gradually `blend in'  with rest of the network  and thus become less `community-like'.'' However, that study is based on the hypothesis that communities have ``more and/or better-connected `internal edges' connecting members of the set than `cut edges' connecting to the rest of the world.'' Hence, like most graph partitioning and modularity-based approaches to community detection, their process depends on the local property of connectivity of nodes to neighbors via edges and is not dependent on the structure of the network on the whole. Therefore, it  does not take into account connectivity in a more global sense, as given by centrality metrics. In their paper on motif-based community detection, Arenas et al.\cite{Arenas} state that  the  extended quality functions for-motif based modularity  also obey the principle of the resolution limit. But this limit is now motif-dependent and then several resolution of substructures can be achieved by changing the motif. However, it would be difficult to verify  which resolution of substructures is closest to natural communities. In b-centrality-based modularity, on the other hand, the resolution limit depends on the centrality radius, given by the attenuation factor $\alpha$. Smaller $\alpha$ lead to smaller radii, and, therefore, to division of the network into a larger number of communities~\cite{Ghosh08}.
}

\section{Conclusion}
In this paper, we introduced normalized $\alpha$-centrality as a metric to study network structure. Like the original $\alpha$-centrality~\cite{Bonacich:2001} on which it is based, this metric measures the number of paths that exist between nodes in a network, attenuated by their length with the attenuation parameter $\alpha$. This parameter sets the length scale of the interaction. When $\alpha=0$, the centrality metric takes into account direct edges only and is equivalent to degree centrality. As $\alpha$ increases, the metric takes into consideration more distant network interactions, becoming a more global measure. Normalized $\alpha$-centrality allows us to smoothly interpolate between local metrics, such as degree centrality, and global metrics, such as eigenvector centrality~\cite{Bonacich:2001}. Unlike the original $\alpha$-centrality, which bounds  $\alpha$ to be less than the reciprocal of the largest eigenvalue of the adjacency metric of the network, normalized $\alpha$-centrality sets no such limit.

We used normalized $\alpha$-centrality to study the structure of networks, specifically, identify important nodes and communities within the network. We extended the modularity maximization class of algorithms~\cite{GirvanNewman02} to use (normalized) $\alpha$-centrality, rather than edge density, as a measure of network connectivity. For small values of $\alpha$ smaller, more locally connected communities emerge, while for larger values of $\alpha$, we observe larger globally connected communities. We also used this metric to rank nodes in a network.  By studying changes in rankings that occur when parameter $\alpha$ is varied, we were able to identify locally important `leaders' and globally important `bridges' or `brokers' that facilitate communication between different communities. We applied this approach to  benchmark networks studied in literature and found that it results in network division in close agreement with the ground truth. We can easily extend this definition to multi-modal networks that link entities of different types, and use approach described in this paper to study the structure of such networks~\cite{Ghosh09socialcom}.

\begin{acknowledgments}
This work is supported in part by the NSF under award 0915678 and in part by AFOSR.
\end{acknowledgments}

\appendix

\section{Proofs of Convergence}
\label{Appendix}
In this section we prove some properties of the normalized $\alpha$-centrality metric proposed in this paper. The \emph{$\alpha$-centrality matrix} $C(\alpha, \beta,n)$  $\forall \alpha \in [0,1]$ is defined as:
\begin{eqnarray}
C(\alpha,\beta,n) &=& \beta A(I+ \alpha A+\alpha^2 A^2+\cdots+\alpha^{n} A^{n}) \nonumber  \\
&=&{\displaystyle \beta A\sum_{k=0}^n} \alpha^{k}A^{k}
\label{eq:bc_eq}
\end{eqnarray}
\noindent The \emph{normalized  $\alpha$-centrality matrix} is then given by:
\begin{equation}
NC(\alpha, \beta, n)=\frac{C(\alpha, \beta, n)}{\displaystyle \sum_{i,j} {(C(\alpha, \beta, n))}_{ij}}
\end{equation}
\noindent The  normalized $\alpha$-centrality vector is $NC_i(\beta ,\alpha,n\to \infty) =eNC(\alpha,\beta, n \to \infty)$ where $e$ is a $1\times N$ unit vector and $N$ is the number of nodes in the network.

 If $\lambda$ is an \emph{eigenvalue} of $A$, then
\begin{eqnarray}
(I-\frac{1}{\lambda}A)x=0
\end{eqnarray}
Invertibility of $(I-\frac{1}{\lambda}A)$ would lead to the trivial solution of eigenvector $x$ ($x=0$).
Hence for computation of eigenvalues and eigenvectors, we require that no inverse of $(I-\frac{1}{\lambda}A)$  should exist, i.e.
\begin{equation}
Det(I - \frac{1}{\lambda} A)= 0.
\label{eq:char_eq}
\end{equation}
Equation~(\ref{eq:char_eq}) is called the \emph{characteristic equation}, solving which gives the \emph{eigenvalues} and \emph{eigenvectors} of the adjacency matrix $A$.

The adjacency matrix $A$ can be written as:
\begin{equation}
A=X\Lambda X^{-1} = \sum_{i=1}^{N} \lambda_{i}Y_{i}
\label{eq:eigen}
\end{equation}
where $X$ is a matrix whose columns are the eigenvectors of $A$, and
$\Lambda$ is a diagonal matrix whose diagonal elements are the eigenvalues of $A$, $\Lambda_{ii}=\lambda_{i}$, arranged according to the ordering of the eigenvectors in $X$.
Without loss of generality we assume that $\lambda_{1}> \lambda_{2} >\cdots >\lambda_{N}$.
The matrices $Y_{i}$ can be determined from the product
\begin{equation}
Y_{i}=X {Z}_{i}X^{-1}
\label{eq:z}
\end{equation}
where $Z_{i}$ is the \emph{selection matrix} having zeros everywhere except for element ${(Z_i)}_{ii}=1$ ~\cite{Gebali:2008}.

Adjacency matrix $A$ raised to the power $k$ is then given by
\begin{equation}
A^{k}=X{\Lambda}^{k} X^{-1} =\sum_{i=1}^{N} {\lambda_{i}^{k}}Y_{i}
\label{eq: lambda_k}
\end{equation}
\noindent Using Equation~(\ref{eq: lambda_k}), \ref{eq:bc_eq} reduces to
\begin{eqnarray}
C(\alpha,\beta,n)&=&\beta A {\displaystyle \sum_{i=1}^N}{\displaystyle \sum_{k=1}^n} \alpha^{k}{\lambda_{i}^{k}}Y_{i}\nonumber \\
&=& \beta A {\displaystyle \sum_{i=1}^N} \lambda_i  \frac{{(-1)}^{p_i}(1-\alpha^{n+1} \lambda_{i}^{n+1})}{{(-1)}^{p_i}(1-\alpha \lambda_{i})} Y_{i} \nonumber \\
\label{eq:cm}
\end{eqnarray}
\noindent where $p_{i}=0$ if $\alpha \left |\lambda_{i}\right | <1$, and $p_{i}=1$ if $\alpha \left |\lambda_{i}\right | >1$.
For the equations \ref{eq:bc_eq} and \ref{eq:cm}  to hold non-trivially,  $\alpha \neq 1/{\left |\lambda_{i}\right|},\ \forall i \in 1,2\cdots,N$.

We characterize the series \{$NC(\alpha,  \beta, n \to \infty)$\} for $\alpha \in [0,1]$ as follows:
 \begin{enumerate}
\item  $\alpha\ll\frac{1}{\left| \lambda_{1}  \right|}$: If $\alpha\ll\frac{1}{\left| \lambda_{1}  \right|}$ , $C(\alpha,\beta, n \to \infty)$ (and $NC(\alpha,\beta, n \to \infty)$ ) would be independent of $\alpha$, since
\begin{eqnarray}
C(\alpha,  \beta, n \to \infty) \approx \beta A \nonumber \\
NC(\alpha, \beta,  n \to \infty) \approx \frac{A}{\sum_{ij}(A)_{ij}}
\label{eq:b_alpha1}
\end{eqnarray}
\item   $\alpha<\frac{1}{\left |\lambda_{1}\right | }$: The sequence of matrices $\{ C(\alpha,\beta,n) \}$  \emph{converges} to $C(\alpha, \beta) $ as $n \to \infty $ if all the sequences $ \{{(C(\alpha ,\beta, n))}_{i j}\}$ for every  fixed $i$ and $j$ converge to ${(C(\alpha, \beta))}_{i j}$ ~\cite{Dienes:1932}.
If  $\alpha<\frac{1}{\left |\lambda_{1}\right | }$,  $ {C(\alpha, \beta, n)}$ \emph{converges} to $C(\alpha, \beta)$.
\begin{eqnarray}
%C(\alpha, \beta) &= &C(\alpha, \beta, n \to \infty)=\beta {\displaystyle \sum_{i=0}^N} \frac{\lambda_i}{1-\alpha \lambda_{i}} Y_{i} \nonumber \\
%                & =& {\beta A(I- \alpha A)^{-1}}  \\
C(\alpha, \beta, n \to \infty) & = & \beta A{\displaystyle \sum_{i=0}^N} \frac{\lambda_i}{1-\alpha \lambda_{i}} Y_{i} = {\beta A(I- \alpha A)^{-1}} \nonumber \\
        & = &  C(\alpha,\beta) \\
NC(\alpha, \beta, n \to \infty) & = & \frac{C(\alpha,\beta)}{\sum_{ij} {(C(\alpha, \beta))}_{ij}}
\label{eq:b_alpha}
\end{eqnarray}
\item  $\alpha>\frac{1}{\left |\lambda_{1}\right | } $ and $n \to \infty$, ${ \beta \alpha^{n}A^{n+1}}$  dominates in the Equation~(\ref{eq:cm}).
\begin{eqnarray}
C(\alpha, \beta, n\to \infty) \approx {\beta \alpha^{n}A^{n+1}} \nonumber \\
NC(\alpha, \beta, n\to \infty) \approx \frac{A^{n+1}}{\displaystyle \sum_{i,j} {A}^{n+1}_{ij}}
\label{eq:maxA}
\end{eqnarray}
\end{enumerate}

\begin{theorem}
\label{th1}
The induced ordering of nodes due to normalized $\alpha$-centrality is equal to the induced ordering of nodes due to $\alpha$-centrality for $\alpha <1/{\left |\lambda_{1}\right | }$.
\end{theorem}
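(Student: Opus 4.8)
The plan is to exploit the fact that normalization merely rescales every matrix entry by one common positive scalar, and that multiplying a list of numbers by a positive constant leaves their relative order unchanged. Writing $S = \sum_{ij} C_{ij}(\alpha,\beta,n\to\infty)$ for the normalization constant appearing in Eq.~(\ref{eq:inf1}), the definition gives $NC_{ij}(\alpha,\beta,n\to\infty) = C_{ij}(\alpha,\beta,n\to\infty)/S$ for every pair $(i,j)$, with $S$ independent of $i$ and $j$. Summing over $j$ then yields $NC_i(\alpha,\beta,n\to\infty) = C_i(\alpha,\beta,n\to\infty)/S$ at the node level, so the two centrality vectors differ only by the global factor $1/S$.

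The one point demanding care is that $S$ is strictly positive, since an order-\emph{preserving} rescaling requires dividing by a positive (not merely nonzero) scalar. For $\alpha < 1/|\lambda_1|$ the series converges, as established in item (2) of the appendix, to $C(\alpha,\beta) = \beta A (I-\alpha A)^{-1} = \beta A \sum_{k\ge 0} \alpha^k A^k$. Since $A$ is a non-negative adjacency matrix and $\alpha>0$, $\beta>0$, every term $\beta\,\alpha^{k} A^{k+1}$ is entrywise non-negative, so $C(\alpha,\beta)\ge 0$ entrywise; and because the network contains at least one edge, the leading term $\beta A$ already contributes a strictly positive entry. Hence $S = \sum_{ij} C_{ij}(\alpha,\beta,n\to\infty) > 0$.

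With $S>0$ in hand, the ordering claim is immediate. For any two nodes $i$ and $i'$,
\begin{equation}
NC_i \ge NC_{i'} \iff \frac{C_i}{S} \ge \frac{C_{i'}}{S} \iff C_i \ge C_{i'}, \nonumber
\end{equation}
and the same equivalences hold for strict inequality and for equality. Consequently the ranking induced by $NC$ coincides with that induced by $C$ throughout the regime $\alpha < 1/|\lambda_1|$, which is exactly the assertion of the theorem.

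I expect no genuine obstacle here: the result is essentially the observation that rank statistics are invariant under scaling by a positive constant. The only step worth writing out explicitly is the positivity of $S$, which is why I would make the entrywise non-negativity of $C(\alpha,\beta)$ visible through its power-series form and invoke $\alpha>0$, $\beta>0$, together with the presence of at least one edge, to exclude the degenerate case $S\le 0$. Everything else is routine substitution.
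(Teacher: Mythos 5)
Your proof is correct and takes essentially the same route as the paper's: both rest on the observation that $NC(\alpha,\beta,n\to\infty)$ differs from $C(\alpha,\beta,n\to\infty)$ only by division by the global scalar $\sum_{ij}C_{ij}(\alpha,\beta,n\to\infty)$, which cannot alter the induced ranking once the series converges for $\alpha < 1/\left|\lambda_{1}\right|$. The only difference is that you spell out the strict positivity of the normalizing constant (via entrywise non-negativity of the power series and the presence of at least one edge), a point the paper's one-line proof leaves implicit.
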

\begin{proof}
Since centrality score due to $\alpha$-centrality is  $eC(\alpha, \beta,  n \to \infty)$ and  that due to normalized $\alpha$-centrality is $ eNC(\alpha,\beta, n \to \infty)$, from equations \ref{eq:b_alpha1} and \ref{eq:b_alpha}, the  induced ordering of nodes due to $\alpha$-centrality ($\alpha <1/{\left |\lambda_{1}\right | }$) would be equal to  induced ordering of nodes due to normalized $\alpha$-centrality  ($\alpha <1/{\left |\lambda_{1}\right | }$).
\end{proof}

\begin{theorem}
\label{th2}
The value of normalized $\alpha$-centrality matrix remains the same $\forall \alpha \in (1/{\left |\lambda_{1}\right | },1]$ ($NC(\alpha >1/{\left |\lambda_{1}\right | , \beta , n\to \infty} )=NC(\beta, n\to \infty)$).
\end{theorem}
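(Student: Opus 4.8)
The plan is to reduce the claim to the spectral expression already derived in Equation~(\ref{eq:cm}) and to observe that the entire dependence on $\alpha$ and $\beta$ sits in a single scalar prefactor that cancels upon normalization. First I would fix an arbitrary $\alpha\in(1/|\lambda_1|,1]$ and examine the $n\to\infty$ behaviour of each summand $\lambda_i\frac{1-\alpha^{n+1}\lambda_i^{n+1}}{1-\alpha\lambda_i}Y_i$ of Equation~(\ref{eq:cm}) (the factors $(-1)^{p_i}$ cancel top and bottom). By the standing hypothesis $|\lambda_1|>|\lambda_i|$ for every $i\neq1$, together with $\alpha|\lambda_1|>1$, the factor $\alpha^{n+1}\lambda_1^{n+1}$ grows strictly faster than the corresponding factor for any other index: terms with $\alpha|\lambda_i|<1$ stay bounded, while terms with $\alpha|\lambda_i|>1$ grow like $(\alpha|\lambda_i|)^{n}$ and are dominated by $(\alpha|\lambda_1|)^{n}$ since $|\lambda_1/\lambda_i|^{n}\to\infty$. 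Hence the leading contribution is $\beta A\,\lambda_1\frac{1-\alpha^{n+1}\lambda_1^{n+1}}{1-\alpha\lambda_1}Y_1$, and using $AY_1=\lambda_1Y_1$,
\begin{equation}
C(\alpha,\beta,n)\ \sim\ s(\alpha,\beta,n)\,Y_1\qquad(n\to\infty), \nonumber
\end{equation}
for a nonzero scalar $s(\alpha,\beta,n)$ absorbing all the $\alpha,\beta,n$ dependence. This is just the precise form of case~3, Equation~(\ref{eq:maxA}), since $A^{n+1}\sim\lambda_1^{n+1}Y_1$.

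The second step is to normalize. Dividing by $\sum_{ij}C_{ij}(\alpha,\beta,n)\sim s(\alpha,\beta,n)\sum_{ij}(Y_1)_{ij}$ cancels the scalar $s(\alpha,\beta,n)$ exactly, leaving
\begin{equation}
NC(\alpha,\beta,n\to\infty)=\frac{Y_1}{\sum_{ij}(Y_1)_{ij}}. \nonumber
\end{equation}
The right-hand side contains neither $\alpha$ nor $\beta$ nor $n$, so the limiting normalized matrix takes one and the same value for every $\alpha$ in $(1/|\lambda_1|,1]$, which is exactly the asserted equality $NC(\alpha>1/|\lambda_1|,\beta,n\to\infty)=NC(\beta,n\to\infty)$.

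The delicate points I would pin down are, first, that the normalizing denominator does not vanish, i.e.\ $\sum_{ij}(Y_1)_{ij}=(e\,x_1)(e\,y_1)\neq0$, where $x_1,y_1$ are the right and left eigenvectors for $\lambda_1$; for a connected network this follows from the Perron--Frobenius property that the leading eigenvector has strictly positive entries, so neither factor can be zero. Second, I would replace the heuristic $\sim$ by an honest entrywise limit: dividing numerator and denominator of each ratio in Equation~(\ref{eq:cm}) by $\alpha^{n+1}\lambda_1^{n+1}$ and letting $n\to\infty$ sends every non-leading contribution to zero, so that $C/\sum_{ij}C_{ij}$ converges entrywise to $Y_1/\sum_{ij}(Y_1)_{ij}$ in the sense of convergence of matrix sequences used earlier in the Appendix. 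I expect this entrywise convergence to be the main obstacle, since it is what upgrades the formal cancellation of $s(\alpha,\beta,n)$ into a genuine equality of limiting matrices, and it is precisely where the strict spectral-gap hypothesis $|\lambda_1|>|\lambda_i|$ is essential (without it, two comparably large eigenvalues could leave a residual $\alpha$-dependent mixture of $Y_1$ and $Y_2$ that would not cancel).
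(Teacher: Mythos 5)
Your proof is correct and, at its core, follows the same route as the paper's: for $\alpha>1/\left|\lambda_1\right|$ the highest-order term of the series dominates as $n\to\infty$, all of the $\alpha$-, $\beta$- and $n$-dependence sits in a scalar prefactor, and that scalar cancels upon normalization. The paper's proof of Theorem~\ref{th2} is literally a one-line appeal to Eq.~(\ref{eq:maxA}): $C(\alpha,\beta,n\to\infty)\approx\beta\alpha^{n}A^{n+1}$, hence $NC\approx A^{n+1}/\sum_{ij}A^{n+1}_{ij}$, which contains no $\alpha$. Where you genuinely differ is that you import the spectral-gap hypothesis $|\lambda_1|>|\lambda_i|$, which the paper explicitly reserves for Theorems~\ref{th3} and~\ref{th4} and does not invoke for Theorem~\ref{th2}. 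The gap buys you strictly more than the paper proves here: you identify the common value as the rank-one matrix $Y_1/\sum_{ij}(Y_1)_{ij}$, obtain honest entrywise convergence, and verify nonvanishing of the normalizer via Perron--Frobenius --- in effect you prove Theorem~\ref{th2} together with the $\alpha>1/|\lambda_1|$ part of Theorem~\ref{th3} in one stroke (the paper's $AY_1/\sum_{i,j}(AY_1)_{ij}$ equals your expression since $AY_1=\lambda_1Y_1$). Conversely, the paper's gap-free formulation is only formally more general: without the gap the quantity $A^{n+1}/\sum_{ij}A^{n+1}_{ij}$ need not converge (for a bipartite graph with $\lambda_N=-\lambda_1$ it oscillates with the parity of $n$), so ``$NC(\beta,n\to\infty)$'' is then ill-defined and the paper's $\alpha$-independence holds only at each fixed $n$; your version is the one that withstands scrutiny, and your closing remark correctly locates the gap as exactly what forbids a residual $\alpha$-dependent mixture of $Y_1$ and $Y_2$. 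Two small refinements: connectivity is slightly more than you need for the denominator, since nonnegativity of $A$ plus simplicity of $\lambda_1$ already yields nonnegative left and right leading eigenvectors, whence $(ex_1)(ey_1)>0$; and for completeness you should dispose of the degenerate case $\alpha=1/|\lambda_i|$ for some $i\neq 1$, where the geometric ratio in Eq.~(\ref{eq:cm}) breaks down but the corresponding partial sum grows at most linearly in $n$ and is still swamped by $(\alpha|\lambda_1|)^{n}$.
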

\begin{proof}
As can be seen from equation \ref{eq:maxA}  when $\alpha>1/{\left |\lambda_{1}\right | } $ and $n \to \infty$, $NC(\alpha >1/{\left |\lambda_{1}\right | }, \beta, n \to \infty)$ reduces to  ${A^{n+1}}/{\displaystyle \sum_{i,j} {A}^{n+1}_{ij}} = NC(\beta, n\to \infty)$ and is independent of $\alpha$.
\end{proof}

 The remaining theorems hold under the condition that $|\lambda_{1}|$ is strictly greater than any other eigenvalue, which is true in most real life cases studied.
\begin{theorem}
\label{th3}
$\lim_{\alpha  \to 1/{|\lambda_{1}|}}NC_i( \alpha,\beta, n\to \infty) $ exists and $\lim_{\alpha  \to 1/{|\lambda_{1}|}}NC_i( \alpha,\beta, n\to \infty)  =NC_i(\beta, n\to \infty)=NC_i( \alpha >1/{|\lambda_{1}|} ,\beta, n\to \infty)  ={eAY_{1}}/{ \sum_{i,j} {(AY_{1})}_{ij}} $.
\end{theorem}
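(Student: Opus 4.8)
The plan is to start from the convergent closed form that is valid just below the critical value, isolate the single diverging spectral term, and then show that this divergence cancels under normalization, leaving a function of $\alpha$ that is continuous at the endpoint.

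First I would take $\alpha<1/|\lambda_1|$ and invoke Equation~(\ref{eq:b_alpha}), which gives $C(\alpha,\beta,n\to\infty)=\beta A\sum_{i=1}^{N}\frac{\lambda_i}{1-\alpha\lambda_i}Y_i$. Under the standing hypothesis that $|\lambda_1|$ strictly exceeds $|\lambda_i|$ for every $i\ge 2$ (so that, by Perron--Frobenius, $\lambda_1$ is the real positive dominant root with $|\lambda_1|=\lambda_1$), as $\alpha\to 1/|\lambda_1|$ from below only the $i=1$ coefficient $\lambda_1/(1-\alpha\lambda_1)$ diverges, while each remaining coefficient stays finite because $1-\alpha\lambda_i\to 1-\lambda_i/\lambda_1\neq 0$. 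I would therefore factor out the singular scalar,
\[
C(\alpha,\beta,n\to\infty)=\frac{\beta\lambda_1}{1-\alpha\lambda_1}\Big[\,AY_1+\sum_{i=2}^{N}\frac{\lambda_i}{\lambda_1}\,\frac{1-\alpha\lambda_1}{1-\alpha\lambda_i}\,AY_i\Big].
\]

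Next I would substitute this into the normalization and note that the prefactor $\beta\lambda_1/(1-\alpha\lambda_1)$ occurs identically in the numerator $C$ and in the scalar denominator $\sum_{ij}C_{ij}$, so it cancels. This yields, for $\alpha$ just below $1/|\lambda_1|$,
\[
NC(\alpha,\beta,n\to\infty)=\frac{AY_1+\sum_{i=2}^{N}c_i(\alpha)\,AY_i}{\sum_{kl}\big(AY_1+\sum_{i=2}^{N}c_i(\alpha)\,AY_i\big)_{kl}},\qquad c_i(\alpha)=\frac{\lambda_i}{\lambda_1}\,\frac{1-\alpha\lambda_1}{1-\alpha\lambda_i}.
\]
Since $c_i(\alpha)\to 0$ as $\alpha\to 1/|\lambda_1|$ (the factor $1-\alpha\lambda_1\to 0$ while its denominator is bounded away from zero), and the total $\sum_{kl}(AY_1)_{kl}$ is nonzero because $Y_1$ is the spectral projector onto the positive Perron eigenvector and $AY_1=\lambda_1Y_1$, the quotient is continuous at the endpoint. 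This establishes existence of the limit and gives $\lim NC=AY_1/\sum_{ij}(AY_1)_{ij}$; multiplying on the left by $e$ produces the stated vector form $eAY_1/\sum_{ij}(AY_1)_{ij}$.

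Finally I would reconcile this one-sided limit with the value already recorded on the other side. For $\alpha>1/|\lambda_1|$, Equation~(\ref{eq:maxA}) gives $NC=A^{n+1}/\sum_{ij}A^{n+1}_{ij}$ as $n\to\infty$; writing $A^{n+1}=\sum_i\lambda_i^{n+1}Y_i$ and dividing by $\lambda_1^{n+1}$, strict dominance forces $A^{n+1}\to\lambda_1^{n+1}Y_1$, so $NC(\beta,n\to\infty)=Y_1/\sum_{ij}(Y_1)_{ij}$. The two expressions agree because $AY_1=\lambda_1Y_1$ (from $A=\sum_i\lambda_iY_i$ together with $Y_iY_j=\delta_{ij}Y_i$), which makes $AY_1/\sum_{ij}(AY_1)_{ij}=Y_1/\sum_{ij}(Y_1)_{ij}$. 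I expect the main obstacle to be handling the $0/0$-type behaviour cleanly: the decisive observation is that the entire divergence of $C$ is concentrated in the scalar $\beta\lambda_1/(1-\alpha\lambda_1)$, which multiplies the whole matrix and is therefore invisible to the normalization, so what survives is a genuinely continuous matrix-valued function of $\alpha$ at $1/|\lambda_1|$.
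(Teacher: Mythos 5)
Your proposal is correct and follows essentially the same route as the paper's proof: both compute the two one-sided limits from the spectral decomposition, observe that near $\alpha=1/|\lambda_{1}|$ the term associated with $Y_{1}$ dominates (from below via the diverging coefficient $\beta\lambda_{1}/(1-\alpha\lambda_{1})$, from above via $A^{n+1}\sim\lambda_{1}^{n+1}Y_{1}$), and conclude that normalization cancels the scalar divergence so both limits equal $AY_{1}/\sum_{i,j}(AY_{1})_{ij}$. Your explicit factorization with $c_{i}(\alpha)\to 0$ and the identity $AY_{1}=\lambda_{1}Y_{1}$ merely make rigorous what the paper asserts informally (``all other eigenvectors shrink in importance''), so this is a tidier writeup of the same argument rather than a different proof.
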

\begin{proof}
Under the assumption that $|\lambda_{1}|$ is strictly greater than any eigenvalue, as $\alpha \to 1/{|\lambda^{-}_{1}|}$, Equation~(\ref{eq:b_alpha})  reduces to
\begin{equation}
C(\alpha \to 1/{|\lambda^{-}_{1}| },  \beta, n \to \infty) \approx  \frac{\beta  \lambda_1}{1-\alpha \lambda_{1}}AY_{1}
\end{equation}
This is because all other eigenvectors shrink in importance as $\alpha \to 1/{|\lambda^{-}_{1}|}$ \cite{Bonacich:2001}. Therefore, as $\alpha \to 1/{|\lambda^{-}_{1}|}$ , we have
\begin{equation}
NC(\alpha  \to 1/{|\lambda^{-}_{1}|}, \beta, n \to \infty) \approx \frac{AY_{1}}{\displaystyle \sum_{i,j}{(AY_{1})}_{ij}}
\label{eq:max1}
\end{equation}
Under the assumption that $|\lambda_{1}|$ is strictly greater than any other eigenvalue, ${\beta \alpha^{n}\lambda_{1}^{n}AY_{1}}$  dominates in the Equation~(\ref{eq:cm}),  \ref{eq:maxA}.
\begin{eqnarray}
C(\alpha \to 1/{|\lambda^{+}_{1}|}, \beta, n\to \infty) \approx { \beta \alpha^{n}\lambda_{1}^{n}AY_{1}} \nonumber \\
NC(\alpha \to 1/{|\lambda^{+}_{1}|}, \beta, n\to \infty) \approx \frac{AY_{1}}{\displaystyle \sum_{i,j} {(AY_{1})}_{ij}}
\label{eq:max}
\end{eqnarray}

Hence from equation \ref{eq:max},  as $\alpha \to 1/{|\lambda^{+}_{1}|}$,  we have
\begin{equation}
NC(\alpha  \to 1/{|\lambda^{+}_{1}|}, \beta, n \to \infty) \approx \frac{AY_{1}}{\displaystyle \sum_{i,j}{(AY_{1})}_{ij}}
\label{eq:max2}
\end{equation}

Since,
\begin{eqnarray*}
\lim_{\alpha  \to 1/{|\lambda^{-}_{1}|}}NC(\alpha, \beta, n \to \infty)  & = & \lim_{\alpha  \to 1/{|\lambda^{+}_{1}|}}NC(\alpha, \beta,  n \to \infty) \\
& & =\frac{AY_{1}}{\sum_{i,j} {(AY_{1})}_{ij}},
\end{eqnarray*}
\noindent therefore, the limit
$\lim_{\alpha  \to 1/{|\lambda_{1}|}}NC(\alpha, \beta, n \to \infty)$ exists and
\begin{eqnarray}
\lim_{\alpha  \to 1/{|\lambda_{1}|}}NC(\alpha, \beta, n \to \infty) & = &\frac{AY_{1}}{ \sum_{i,j} {(AY_{1})}_{ij}} \nonumber \\
 & = & NC(\beta, n\to \infty) \nonumber
\label{eq:max3}
\end{eqnarray}
\noindent Since $NC_i( \alpha,\beta, n\to \infty)= eNC(\alpha, \beta,  n \to \infty)$, therefore,  $\lim_{\alpha  \to 1/{|\lambda_{1}|}}NC_i( \alpha,\beta, n\to \infty)  =NC_i(\beta, n\to \infty)=NC_i( \alpha >1/{|\lambda_{1}|} ,\beta, n\to \infty)  ={eAY_{1}}/{ \sum_{i,j} {(AY_{1})}_{ij}} $.

\end{proof}

\begin{theorem}
\label{th4}
For symmetric matrices, the  induced ordering of nodes due to eigenvector centrality $C_{E}$ is equivalent to the  induced ordering of nodes given by normalized centrality  $NC_i(\beta, n\to \infty)=\lim_{\alpha  \to 1/{|\lambda_{1}|}}NC_i( \alpha,\beta, n\to \infty)  =NC_i( \alpha >1/{|\lambda_{1}|} ,\beta, n\to \infty)  ={eAY_{1}}/{ \sum_{i,j}{(AY_{1})}_{ij}} $.
\end{theorem}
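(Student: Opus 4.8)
The plan is to exploit the spectral decomposition established earlier in the appendix together with the symmetry of $A$ to show that the converged centrality vector $NC_i(\beta, n\to\infty)$ is a positive scalar multiple of the principal eigenvector, which is precisely the eigenvector centrality $C_{E}$. Once that proportionality is in hand, equality of the two induced orderings is immediate, since ranking by a positive multiple of a vector's components reproduces the ranking by the components themselves.

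First I would specialize the selection-matrix construction $Y_i = X Z_i X^{-1}$ to the symmetric case. When $A$ is symmetric its eigenvectors may be chosen orthonormal, so $X$ is orthogonal and $X^{-1}=X^{T}$. Since $Z_1$ has a single nonzero entry $(Z_1)_{11}=1$, the projector collapses to $Y_1 = X Z_1 X^{T} = x_{1} x_{1}^{T}$, where $x_{1}$ is the column of $X$ associated with $\lambda_{1}$, i.e.\ the principal eigenvector that defines eigenvector centrality.

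Next I would evaluate the numerator $eAY_{1}$ appearing in Theorem~\ref{th3}. Using $A x_{1} = \lambda_{1} x_{1}$ gives $AY_{1} = \lambda_{1} x_{1} x_{1}^{T}$, and left-multiplying by the unit row vector $e$ yields $eAY_{1} = \lambda_{1}\,(e x_{1})\, x_{1}^{T}$, where $e x_{1} = \sum_{k}(x_{1})_{k}$ is a scalar. Dividing by the normalizing constant $\sum_{ij}(AY_{1})_{ij}$, which is likewise independent of the node index, shows that the $i$-th component of the converged centrality satisfies $NC_i(\beta,n\to\infty) = \kappa\,(x_{1})_{i}$, with $\kappa = \lambda_{1}(e x_{1})/\sum_{ij}(AY_{1})_{ij}$ the same constant for every node.

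Finally I would argue that $\kappa>0$, so that the common scaling preserves rather than reverses the ordering. For a connected network the nonnegative adjacency matrix $A$ is irreducible, so by the Perron--Frobenius theorem $\lambda_{1}>0$ and the principal eigenvector $x_{1}$ may be taken with all strictly positive entries; hence $e x_{1}>0$ and the denominator is positive as well. Because each $NC_i$ is then a common positive multiple of $(x_{1})_{i}$, sorting nodes by $NC_i(\beta,n\to\infty)$ yields exactly the ordering obtained by sorting the components of $x_{1}$, which is the eigenvector-centrality ranking. I expect the sign bookkeeping in this last step to be the main obstacle: an eigenvector is defined only up to scale, and since the statement concerns ordering, one must invoke Perron--Frobenius to fix a consistent sign for $x_{1}$ and $\lambda_{1}$, as a negative $\kappa$ would invert the ranking.
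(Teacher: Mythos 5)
Your proof follows essentially the same route as the paper's: both specialize the spectral decomposition to the symmetric case so that $Y_{1}=X Z_{1}X^{T}=X_{1}X_{1}^{T}$, factor the node-independent scalar out of ${eAY_{1}}/{\sum_{i,j}(AY_{1})_{ij}}$, and conclude that the converged normalized centrality is a scalar multiple of the principal eigenvector, hence induces the eigenvector-centrality ordering. Your closing Perron--Frobenius argument fixing the sign of the constant $\kappa$ addresses a point the paper's proof silently assumes (it writes $NC_i(\beta,n\to\infty)=c_{2}X_{1}^{T}$ without verifying $c_{2}>0$), so it is a welcome tightening of the same argument rather than a different approach.
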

\begin{proof}
For symmetric matrices
\begin{equation}
A=X\Lambda X^{-1} = X \Lambda X^{T}
\label{eq:eigen1}
\end{equation}
Therefore equation \ref{eq:z} reduces to
\begin{equation}
Y_{i}=X {Z}_{i}X^{T} = X_{i}X_{i}^{T}
\end{equation}
\noindent where $X_{i}$ is the column of $X$ representing the eigenvector corresponding to $\lambda_{i}$.
Hence, in case of symmetric matrices:
\begin{eqnarray}
NC_i(\beta, n\to \infty)&=&NC_i( \alpha >1/{|\lambda_{1}|} ,\beta, n\to \infty) \nonumber\\
&=& \lim_{\alpha  \to 1/{|\lambda_{1}|}}NC_i( \alpha,\beta, n\to \infty)    \nonumber \\
&=& \frac{eAY_{1}}{ \sum_{i,j} {(AY_{1})}_{ij}}  \nonumber \\
&=& c_{1}eAX_{1}X_{1}^{T} = c_{2}X_{1}^{T}
\label{eq:max4}
\end{eqnarray}
where $c_{1}= \frac{1}{\sum_{i,j} {(AY_{1})}_{ij}}$ and $c_{2}= c_{1}eAX_{1}$.

Since $X_{1}^{T}$ corresponds to the eigenvector centrality vector $C_{E}$, hence for symmetric matrices, the induced ordering of nodes given by eigenvector centrality $C_{E}$ is equivalent to the  induced ordering of nodes given by normalized centrality $NC_i(\beta, n\to \infty)=\lim_{\alpha  \to 1/{|\lambda_{1}|}}NC_i( \alpha,\beta, n\to \infty)  =NC_i( \alpha >1/{|\lambda_{1}|} ,\beta, n\to \infty)  ={eAY_{1}}/{ \sum_{i,j} {(AY_{1})}_{ij}} $.
\end{proof}

\end{document}